\documentclass[journal,onecolumn,letterpaper,12pt]{IEEEtran}

\usepackage[utf8]{inputenc} 
\usepackage[T1]{fontenc}
\usepackage{url}
\usepackage{ifthen}
\usepackage[cmex10]{amsmath} 

\usepackage{amssymb,amsfonts}
\usepackage[dvipsnames,table,xcdraw]{xcolor}
 \usepackage[colorlinks = true, linkcolor = blue, urlcolor=blue, citecolor = red, anchorcolor = green]{hyperref}

\usepackage{multicol}
\usepackage{graphicx} 

\usepackage{amsthm}
\usepackage[capitalize]{cleveref}

\newtheorem{theorem}{Theorem}
\newtheorem{lemma}[theorem]{Lemma}

\newtheorem{definition}[theorem]{Definition}

\newtheorem{proposition}[theorem]{Proposition}

\usepackage{mathtools}
\usepackage{bbm}
\usepackage{braket}

\DeclareMathOperator{\Tr}{Tr}
\newcommand{\tr}{\Tr}
\newcommand{\cA}{\mathcal{A}}
\newcommand{\cB}{\mathcal{B}}
\newcommand{\cC}{\mathcal{C}}
\newcommand{\cD}{\mathcal{D}}

\newcommand{\cH}{\mathcal{H}}

\newcommand{\cL}{\mathcal{L}}
\newcommand{\cM}{\mathcal{M}}
\newcommand{\cN}{\mathcal{N}}

\newcommand{\cP}{\mathcal{P}}

\newcommand{\cR}{\mathcal{R}}

\newcommand{\cU}{\mathcal{U}}
\newcommand{\cV}{\mathcal{V}}

\newcommand{\cY}{\mathcal{Y}}

\newcommand{\Ad}[1]{\mathrm{Ad}_{#1}}

\allowdisplaybreaks

\usepackage[backend=bibtex8,sorting=nty,giveninits=true,doi=false,isbn=false,url=false,maxbibnames=10]{biblatex}
\renewbibmacro{in:}{}
\newbibmacro{string+doi}[1]{\iffieldundef{doi}{#1}{\href{https://dx.doi.org/\thefield{doi}}{#1}}}
\DeclareFieldFormat{title}{\usebibmacro{string+doi}{\mkbibemph{#1}}}
\DeclareFieldFormat[article]{title}{\usebibmacro{string+doi}{\mkbibquote{#1}}}
\DeclareFieldFormat[incollection]{title}{\usebibmacro{string+doi}{\mkbibquote{#1}}}                   
\DeclareFieldFormat[inproceedings]{title}{\usebibmacro{string+doi}{\mkbibquote{#1}}} 

\begin{document}

\title{Privacy-Utility Tradeoffs in \\ Quantum Information Processing} 

 \author{
\IEEEauthorblockN{
Theshani Nuradha\IEEEauthorrefmark{1}\IEEEauthorrefmark{2},
Sujeet Bhalerao\IEEEauthorrefmark{1},
Felix Leditzky\IEEEauthorrefmark{1}\IEEEauthorrefmark{2}
}

\IEEEauthorblockA{\IEEEauthorrefmark{1}
\small Department of Mathematics, University of Illinois Urbana-Champaign, Urbana, IL 61801, USA
}

\IEEEauthorblockA{\IEEEauthorrefmark{2}
\small Illinois Quantum Information Science and Technology (IQUIST) Center,\\
University of Illinois Urbana-Champaign, Urbana, IL 61801, USA
}
}

\maketitle
\begin{abstract}%
When sensitive information is encoded in data, it is important to ensure the privacy of information when attempting to learn useful information from the data. There is a natural tradeoff whereby increasing privacy requirements may decrease the utility of a learning protocol. In the quantum setting of differential privacy, such tradeoffs between privacy and utility have so far remained largely unexplored. In this work, we study optimal privacy-utility tradeoffs for both generic and application-specific utility metrics when privacy is quantified by $(\varepsilon,\delta)$-quantum local differential privacy. In the generic setting, we focus on optimizing fidelity and trace distance between the original state and the privatized state. We show that the depolarizing mechanism achieves the optimal utility for given privacy requirements. We then study the specific application of learning the expectation of an observable with respect to an input state when only given access to privatized states. We derive a lower bound on the number of samples of privatized data required to achieve a fixed accuracy guarantee with high probability. To prove this result, we employ existing lower bounds on private quantum hypothesis testing, thus showcasing the first operational use of them. We also devise private mechanisms that achieve optimal sample complexity with respect to the privacy parameters and accuracy parameters, demonstrating that utility can be significantly improved for specific tasks in contrast to the generic setting. In addition, we show that the number of samples required to privately learn observable expectation values scales as $\Theta((\varepsilon \beta)^{-2})$, where $\varepsilon \in (0,1)$ is the privacy parameter and $\beta$ is the accuracy tolerance. We conclude by initiating the study of private classical shadows, which promise useful applications for private learning tasks.
\end{abstract}

\tableofcontents

\section{Introduction}

With the exponential growth of data generated and shared through modern data systems, it is important to ensure the privacy of data within the processes of data-driven decision-making. To ensure privacy for classical data, statistical privacy frameworks have been introduced. Differential privacy (DP) is one of the widely used frameworks that ensures the privacy of individual data while answering aggregate queries \cite{DMNS06,DR14}. 
With the development of DP, various variants of DP have also been introduced~\cite{mironov2017renyi, CY16, KM14,SoK_DP22}. 
Local DP is one of such variants, where data is privatized at the local devices before releasing the data for the central data-processing \cite{erlingsson2014rappor,kairouz2014extremal}. With that, it is important to study how one could learn useful information through statistical inference with access to privatized data instead of original data. 
Previous works have studied the cost associated with these private learning processes~\cite{kamath2019privately,kamath2020private,structure_optimal_16,Ankit2023simple,Contraction_local_new24,duchi2018minimax,DJW13,DP_testing18,nuradha2022pufferfishJ}. In fact, privacy-utility tradeoffs are observed in many classical inference tasks, especially in the high-privacy regime
~\cite{rassouli2019optimalPrivacyUtilityClassical,SKT18,wang2017privacy,zhao2020notAccuracyLoss,asoodeh2021local,murakami2019utilityOptimizedLDP}.  

With the rapid growth of quantum technologies and the improvement of the capabilities of quantum devices, it is becoming increasingly important to consider the privacy of the data generated from those systems, or fed into those systems. Towards that goal, quantum DP (QDP) frameworks and various operational variants inspired by the classical DP framework have been introduced and studied~\cite{QDP_computation17, aaronson2019gentle, hirche2023quantum, nuradha_QPP,angrisani2023unifying,measuredHS25,gallage2025theory,dasgupta2025quantum,alabi2026QDPComposition}. QDP ensures that two distinct states considered as neighbors are hard to distinguish after being processed by a quantum channel modeling a private mechanism. Quantum local DP (QLDP) is the local variant of QDP, ensuring that two distinct quantum 
states passed through a quantum channel are hard to distinguish by a measurement (see~\eqref{eq:QLDP-def}; all distinct states are neighbors in this setting). This privacy framework is suitable for applications with several local parties generating the states, such as learning about specific properties of the quantum states while not learning the exact quantum state contributed by the local parties.

Similar to the classical setting, privacy in the quantum setting often comes with a cost to pay. For instance, one can make the system perfectly private by using a replacement channel as a private mechanism. However, the information about the original data will be completely lost and impossible to recover after this process, leading to zero utility from the process. Thus, there exist tradeoffs to be understood when providing privacy guarantees~\cite{guan2024optimal,nuradha_QPP,nuradha2024contraction,Christoph2024sample,Farhad_QP_HT,angrisani_localModel25}. Previous studies have focused on generic utility optimization under assumptions on the private mechanism, e.g., unital channels as private mechanisms~\cite{guan2024optimal,nuradha_QPP}, and on specialized applications related to testing such as quantum hypothesis testing~\cite{nuradha2024contraction,Christoph2024sample,Farhad_QP_HT,namPUT_HT_2025quantum}. However, fundamental questions on the optimality of private mechanisms and the cost one needs to pay in ensuring privacy in generic and application-specific scenarios are not well known.

In this work, we study generic utility optimization while ensuring QLDP, and obtain optimal private mechanisms without assuming any special properties of them. We also study an important use case of QLDP in estimating the expectation value of an observable given access to privatized quantum states. To this end, we formulate a private version of the classical shadow framework introduced in~\cite{huang2020predicting,huang2022learning} and utilize its noisy variant~\cite{Koh2022classicalshadows}, enabling a broad range of applications for privately estimating properties of quantum states.  In this estimation setting, we argue that application-specific utility measures are needed to thoroughly study privacy-utility tradeoffs whenever we are aware of how our data is used for inference tasks, whether fully or partially. This work also serves both as a foundation and an invitation to study statistical tasks in quantum information processing while ensuring privacy for quantum data and both fully quantum and hybrid systems.

\subsection{Main Contributions}

\noindent \textbf{Generic Utility Measures:} When one has access to privatized state $\cA(\rho)$ with $\cA$ satisfying $(\varepsilon,\delta)$-QLDP, we provide optimal utility that we can achieve when utility is defined with respect to fidelity and trace distance (Definition~\ref{def:fidelity_utility})  in Theorem~\ref{thm:optimal-utility} and~\eqref{eq:utility_optimized_post_processing}. Note that these generic utility measures quantify the similarity or the closeness of the privatized state and the original state that holds for all input states (in fact, optimizing the worst-case scenario). In our work, we study the general setting of $(\varepsilon,\delta)$-QLDP and all private mechanisms $\cA$, showcasing that the optimal utility is achieved by a depolarizing channel with the flipping parameter $p=d(1-\delta)/(e^\varepsilon +d-1)$ together with Proposition~\ref{prop:optimal_P_QLDP_Dep}. With that, our results recover the previously known optimal utility results under $(\varepsilon,0)$-QLDP for unital channels in~\cite{guan2024optimal}. One of the main technical tools in obtaining the optimality result is that if a channel $\cN$ satisfies $(\varepsilon,\delta)$-QLDP, the average channel $\cN_G$ (twirled channel with respect to the Haar measure on a compact group) also satisfies $(\varepsilon,\delta)$-QLDP (Proposition~\ref{prop:N_u_N_G_private}).
    Note that these utility measures are generic and could be useful in situations where we are not aware of how these data are used in the privatization stage, and the utility is measured with respect to having access to one copy of the state.

\medskip
    \noindent \textbf{Application-Specific Utility Measures:}  Next, we study an operationally motivated setting for estimating the expectation of an observable with respect to an input state with access to only privatized copies of the state (estimating $\Tr[O \rho]$ with access to $n$ copies of $\cA(\rho)$ with $\cA$ being a private mechanism). In this task, we define utility as the sample complexity to achieve a required accuracy guarantee as formally defined in~\eqref{eq:learning_exp_SC_def}. We obtain lower bounds on the defined sample complexity in Proposition~\ref{prop:LB_1} by utilizing the sample complexity of private quantum hypothesis testing for $(\varepsilon,0)$-QLDP, showcasing the first operational use of those characterizations in obtaining limitations of private learning in the quantum setting. In the high privacy regime (i.e.; $\varepsilon \leq 1$), the lower bound on the sample complexity of estimation scales in the order of $1/\varepsilon^2$. 
    
    Furthermore, we provide achievability protocols ($\cA$- private mechanism and $\cP_O$-observable estimation protocol) in Proposition~\ref{prop:upper_1} to obtain upper bounds. In particular, our first approach is to do Pauli measurements randomly based on a probability distribution depending on the observable $O= \sum_P \alpha_P P$, where each $P$ is an $n$-qubit Pauli observable, and apply a depolarizing channel after that.
    With the upper and lower bounds, we show that for $\varepsilon \leq 1$ and fixed observable $O$, the optimal sample complexity scales $\Theta\!\left( \frac{\ln(1/\eta)}{ \varepsilon^2 \beta^2} \right)$, where $\beta,\eta$ are parameters governing the probabilistic accuracy demands (Proposition~\ref{prop:optimal_SC_O}). Moreover, we study the private classical shadows by utilizing the noisy variant of classical shadows (Proposition~\ref{prop:upper_2}). 
    Finally, we discuss the communication costs of different approaches and the dependence of privacy parameters on sample complexity bounds in both high and low privacy regimes.

\section{Notation and Preliminaries}
\label{Sec:Notation_Preliminiaries}

\subsection{Notations}

A quantum system $R$ is identified with a finite-dimensional Hilbert space~$\cH_R$. We denote the set of linear operators acting on $\cH_R$ by $\cL(\cH_R)$. 
Let $\Tr\!\left[C \right]$ denote the trace of $C$. 
 The trace norm  of a matrix $B$ is defined as $\left\|B\right\|_1 \coloneqq \Tr[\sqrt{B^\dagger B} ]$. For operators $A$ and $B$, the notation $A \geq B$ indicates that $A-B$ is a positive semi-definite (PSD) operator, while $A > B$ indicates that $A-B$ is a positive definite operator.
A quantum state $\rho_R\in\cL(\cH_R)$ on $R$ is a PSD, unit-trace operator acting on $\cH_R$. We denote the set of all quantum states in 
$\cL(\cH_R)$ by $\cD(\cH_R)$. 
A state $\rho_R$ of rank one is called pure, and we may choose a normalized vector $| \psi \rangle \in \cH_R$ satisfying $\rho_R= | \psi \rangle\!\langle \psi | $ in this case. Otherwise,
$\rho_R$ is called a mixed state. 
A quantum channel $\cN \colon \cL(\cH_A ) \to \cL(\cH_B)$ is a linear, 
completely positive, and trace-preserving (CPTP) map from $\cL(\cH_A)$ to $\cL(\cH_B)$. We often use the shorthand $\mathcal{N}_{A\to B}$ for such a channel, and we use the notation CPTP to denote the set of all quantum channels. 
A measurement of a quantum system $R$ is described by a
positive operator-valued measure (POVM) $\{M_y\}_{y \in \cY}$, which is defined to be a collection of PSD operators  satisfying $\sum_{y \in \cY} M_y= I_{R}$, where $\cY$ is a finite alphabet. According to the Born rule, after applying the above POVM to $\rho \in \cD(\cH_R)$, the probability of observing the outcome $y$ is given by~$\Tr\!\left[M_y \rho \right]$.

For $\sigma$ a state, the normalized trace distance between the states $\rho$ and $\sigma$ is defined as $ T(\rho,\sigma)\coloneqq \frac{1}{2} \left\| \rho -\sigma\right\|_1.$
It generalizes the total-variation distance between two probability distributions. For $\gamma \geq 1$,
the quantum hockey-stick divergence is defined as~\cite{SW12}
\begin{equation}\label{eq:hockey_stick}
    E_\gamma(\rho \Vert \sigma) \coloneqq \Tr\!\left[(\rho -\gamma \sigma)_{+} \right],
\end{equation}
where $\left(  A\right)  _{+}\coloneqq \sum_{i\colon a_{i}\geq0}a_{i}|i\rangle\!\langle i| $
for a Hermitian operator $A = \sum_{i}a_{i}|i\rangle\!\langle i|$.  For $\gamma=1$, observe that $ E_1(\rho \Vert \sigma)=T(\rho,\sigma)$.
For a real number $x$, we set $(x)_+ = \max\{x,0\}.$
The Uhlmann fidelity is defined as~\cite{Uhl76_nourl}
\begin{equation}\label{eq:fidelity}
F(\rho,\sigma) \coloneqq \left\| \sqrt{\rho} \sqrt{\sigma}\right\|_1^2.
\end{equation}

\begin{definition} \label{def:twirled_channel}
    Let $G$ be a compact group with unitary representations $g\mapsto U_A(g)$ on $A$ and $g\mapsto V_B(g)$ on $B$.
For a unitary $W$, write $\Ad{W}$ for the adjoint representation $\Ad{W}(X)\coloneqq W X W^\dagger$. Given any channel $\mathcal N\colon A\to B$, define its $G$-\emph{twirl} by $\mathcal N_G \ \coloneqq \ \int_G \Ad{V_B(g)^\dagger} \circ \mathcal N \circ \Ad{U_A(g)} \mathrm d\mu(g) ,$
where $\mu$ is the Haar measure on $G$. 
\end{definition}

\subsection{Quantum Local Differential Privacy}

Quantum local differential privacy (QLDP) ensures the privacy of quantum states, which are given as inputs to a  private quantum channel~\cite{hirche2023quantum,nuradha_QPP}. 
\begin{definition}[Quantum Local Differential Privacy]\label{def:QLDP}
    Fix $\varepsilon \geq 0$ and $\delta \in [0,1]$. 
    Let $\cA$ be a quantum algorithm (viz., a quantum channel). The algorithm~$\cA$ is $ (\varepsilon, \delta)$-quantum local differentially private (or $(\varepsilon,\delta)$-QLDP) if 
\begin{equation} \Tr\!\left[M \cA(\rho)\right] \leq e^\varepsilon \Tr\!\left[M \cA(\sigma)\right] + \delta,\qquad \forall  \rho, \sigma \in \cD(\cH), \quad \forall M: 0\leq M \leq I.
\label{eq:QLDP-def}
\end{equation}
We say that $\cA$ satisfies $\varepsilon$-QLDP if it satisfies $(\varepsilon,0)$-QLDP.
\end{definition}
Note that we are using the names mechanism, channel, and algorithm interchangeably to identify $\cA$ that satisfies $(\varepsilon,\delta)$-QLDP. It was shown in~\cite[Eq.~(V.1)]{hirche2023quantum} that  $(\varepsilon,\delta)$-QLDP of
a mechanism $\cA$ is equivalent to the condition
\begin{equation}
\label{eq:equivalent_HS_QLDP_delta}
    \sup_{\rho,\sigma \in \cD(\cH)}  E_{e^\varepsilon}\!\left( \cA(\rho) \Vert \cA(\sigma) \right)  \leq \delta,
\end{equation}
 where $E_{\gamma}(\cdot \| \cdot )$ is defined in~\eqref{eq:hockey_stick} for $\gamma \geq 1$. Furthermore, it is sufficient to consider the supremum over orthogonal pure states in~\eqref{eq:equivalent_HS_QLDP_delta}.
 That is, $(\varepsilon,\delta)$-QLDP is equivalent to
\begin{equation} \label{eq:equivalence_pure}
    \sup_{\rho,\sigma \in \cD(\cH)}  E_{e^\varepsilon}\!\left( \cA(\rho) \Vert \cA(\sigma) \right)  = \sup_{\varphi_{1}\perp\varphi_{2}}E_{e^\varepsilon}(\mathcal{A}(\varphi_{1}
)\Vert\mathcal{A}(\varphi_{2})) \leq \delta,
\end{equation}
where $\varphi_{1}$ and $\varphi_{2}$ are orthogonal pure
states.
This is implied by the proof of~\cite[Theorem~II.2]{hirche2023quantum}, and an alternative proof is given in~\cite[Appendix~A]{nuradha2024contraction}. 
 We also denote the set of all $(\varepsilon,\delta)$-QLDP mechanisms by \begin{equation} \label{eq:private-channels}
\mathcal{B}^{\varepsilon,\delta} \coloneqq \left\{ \cN \in \operatorname{CPTP}: \sup_{\rho,\sigma \in \cD(\cH)} E_{e^\varepsilon}\!\left( \cN(\rho) \Vert \cN(\sigma) \right) \leq \delta \right\}.
\end{equation}

\section{Privacy-Utility Tradeoff} \label{Sec:privacy_Utility}

In this section, we study tradeoffs associated with privacy and utility, while ensuring QLDP for quantum states. First, we find the optimal QLDP mechanism parameters by choosing a depolarizing channel as the private mechanism. Then, we discuss the utility attained by private mechanisms to obtain optimal utility and 
show that optimality is achieved by a depolarizing mechanism. 
\subsection{Optimal Local Differential Privacy Mechanisms}

\medskip
Consider the depolarizing channel acting on a $d$-dimensional system: $ \cA_{p}(\rho) = (1-p) \rho + p \tr(\rho)\frac{1}{d} I.$
Recall that a channel $\cA$ satisfies $(\varepsilon, \delta)$-quantum local differential privacy (QLDP) if and only if 
\begin{equation}
    \sup_{\rho, \sigma \in \cD(\cH)} E_{e^\varepsilon}\!\left( \cA(\rho) \Vert \cA(\sigma) \right) \leq \delta.
\end{equation}

In~\cite{hirche2023quantum,nuradha_QPP}, a sufficient condition on $p$ to achieve QLDP has been derived. Now, we show that those noise levels are not only sufficient, but also necessary to achieve the level of privacy demanded by QLDP. Note that the above claim is proven for $\delta=0$ case in~\cite{guan2024optimal} via different proof arguments.

\begin{proposition}[Optimal $p$ for QLDP] \label{prop:optimal_P_QLDP_Dep}
     Let $\varepsilon \geq 0$ and $\delta \in [0,1]$. Then, we have that 
    \begin{equation}
        \cA_p \textnormal{ is } (\varepsilon, \delta)\textnormal{-QLDP} \iff p \geq \frac{d(1-\delta)}{e^\varepsilon +d-1}.
    \end{equation}
In particular, $p^* = \frac{d(1-\delta)}{e^\varepsilon +d-1}$ is the optimal noise parameter for depolarizing channel 
$\cA_p$ to achieve $(\varepsilon, \delta)$-QLDP.
\end{proposition}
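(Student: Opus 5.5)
The plan is to compute the privacy quantity of $\cA_p$ exactly in closed form, using the reduction in~\eqref{eq:equivalence_pure} to orthogonal pure states. That reduction says $\cA_p$ is $(\varepsilon,\delta)$-QLDP if and only if
\[
\sup_{\varphi_1\perp\varphi_2} E_{e^\varepsilon}\!\left(\cA_p(\varphi_1)\Vert\cA_p(\varphi_2)\right)\le\delta .
\]
The depolarizing channel is unitarily covariant, so $\cA_p(U\varphi U^\dagger)=U\cA_p(\varphi)U^\dagger$. The hockey-stick divergence is unitarily invariant, and any pair of orthonormal vectors can be mapped to any other pair by a unitary. Hence the quantity inside the supremum does not depend on the particular orthogonal pair. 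We may therefore fix $\varphi_1=|1\rangle\!\langle 1|$ and $\varphi_2=|2\rangle\!\langle 2|$, and the supremum is just one explicit value.

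Write $\gamma=e^\varepsilon\ge 1$. The operator
\[
\cA_p(\varphi_1)-\gamma\cA_p(\varphi_2)=(1-p)|1\rangle\!\langle 1|-\gamma(1-p)|2\rangle\!\langle 2|+(1-\gamma)\tfrac{p}{d}I
\]
is diagonal in the basis $\{|i\rangle\}$, so its spectrum can be read off directly:
\begin{itemize}
\item on $|1\rangle$ the eigenvalue is $(1-p)+(1-\gamma)p/d$;
\item on $|2\rangle$ it is $-\gamma(1-p)+(1-\gamma)p/d$;
\item on the orthogonal complement it is $(1-\gamma)p/d$.
\end{itemize}
For $p\le 1$ the last two are nonpositive. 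When $p>1$ (allowed up to $d^2/(d^2-1)$ by complete positivity), a short check is needed. In that range $(1-p)+(1-\gamma)p/d<0$, and the only possibly positive eigenvalue is the one on $|2\rangle$ when $\gamma=1$. I will handle this edge separately; for $\gamma=1$ symmetry gives the value $|1-p|$, which I will verify is still $\le \delta$ exactly when the stated threshold holds. Otherwise only the first eigenvalue can contribute, so
\[
E_{\gamma}\!\left(\cA_p(\varphi_1)\Vert\cA_p(\varphi_2)\right)=\left(1-p\,\tfrac{d-1+\gamma}{d}\right)_+ .
\]

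The condition $\left(1-p(d-1+\gamma)/d\right)_+\le\delta$ is, for $\delta\in[0,1]$, equivalent to $1-p(d-1+\gamma)/d\le\delta$. This rearranges to $p\ge d(1-\delta)/(e^\varepsilon+d-1)$, which gives both directions of the equivalence at once. The optimality claim for $p^*$ then follows because utility measures such as fidelity with the input decrease in $p$ on the relevant range. Even without that, $p^*$ is by the equivalence the least noise parameter that achieves privacy.

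The main obstacle is not the calculation itself. It is being careful that the reduction~\eqref{eq:equivalence_pure} plus covariance legitimately turns a supremum into a single evaluation, and handling the boundary cases: $\delta=1$, where the condition is vacuous; $\gamma=1$; and $p>1$, where the sign pattern of the eigenvalues changes.
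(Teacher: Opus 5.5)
For $p\in[0,1]$ your argument is the paper's proof. You reduce to orthogonal pure states via \eqref{eq:equivalence_pure}, write $\cA_p(\varphi_1)-\gamma\cA_p(\varphi_2)$ in terms of the orthogonal projections $\varphi_1$, $\varphi_2$, $I-\varphi_1-\varphi_2$, observe that only the coefficient on $\varphi_1$ can be positive, and rearrange $\bigl(1-p\frac{d-1+\gamma}{d}\bigr)_+\le\delta$. Your unitary-covariance step is harmless but unnecessary, since that decomposition already gives the same value for every orthogonal pair.

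Your handling of $p>1$ contains false claims, and no separate verification can fix it, because the equivalence itself fails there. For $p>1$ the eigenvalue on the $\varphi_2$ direction is $\gamma(p-1)-(\gamma-1)p/d$. This is strictly positive for every $\gamma\in[1,\,p/(d-p(d-1)))$, not only at $\gamma=1$. So ``otherwise only the first eigenvalue can contribute'' is wrong: your formula would give $E_\gamma=0$ where the true value is positive. At $\gamma=1$, the value $|1-p|=p-1$ is $\le\delta$ iff $p\le 1+\delta$, which is not equivalent to the threshold $p\ge 1-\delta$.

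A concrete counterexample: take $d=2$ and $p=4/3$, which is the boundary of complete positivity. With $\varphi_1=|0\rangle\!\langle 0|$ and $\varphi_2=|1\rangle\!\langle 1|$ you get $\cA_p(\varphi_1)=\mathrm{diag}(1/3,2/3)$ and $\cA_p(\varphi_2)=\mathrm{diag}(2/3,1/3)$. Hence $E_{e^\varepsilon}=(2-e^\varepsilon)/3>0$ for all $\varepsilon<\ln 2$. So $\cA_{4/3}$ is not $(\varepsilon,0)$-QLDP, even though $4/3\ge 2/(e^\varepsilon+1)$.

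The proposition must be read with $p\in[0,1]$. The paper uses $p\le 1$ tacitly when it asserts that the positive part comes only from $\varphi_1$. You should state that restriction explicitly rather than claim the extension.
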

The proof utilizes~\eqref{eq:equivalence_pure} together with the fact that $\varphi_1$, $\varphi_2$ and $I-\varphi_1-\varphi_2$ are orthogonal projections for any pair of orthogonal pure states $(\varphi_1,\varphi_2)$, and is given in Appendix~\ref{App:optimal_p_QLDP}.

\subsection{Optimal Privacy-Utility of QLDP Mechanisms}
Here, we analyse the optimal utility achieved by $(\varepsilon, \delta)$-QLDP mechanisms. In~\cite{guan2024optimal}, this question is partially answered by showing that a depolarizing channel achieves some optimal utility measures (introduced next based on trace distance and fidelity) when optimized over all unital channels that satisfy $(\varepsilon,0)$-QLDP. The main contribution of our work is to obtain optimal utility for all $\varepsilon \geq 0$ and $\delta \in [0,1]$ under all $(\varepsilon, \delta)$-mechanisms having the same input and output dimensions; in fact, in this general setting, we show that a depolarizing channel with the right parameter choice achieves the optimal utility for certain unitarily invariant utility metrics defined next.

First, we recall two utility metrics of interest that were also studied in~\cite{guan2024optimal}, considering unital channels.
Here we define them for arbitrary, not necessarily unital, quantum channels mapping between isomorphic systems, i.e., $\cN\colon \cL(\cH_A) \to \cL(\cH_B)$ such that $d_A=d_B$. 
We will also keep this assumption in the following, unless specified otherwise.

\begin{definition} [Utilities] \label{def:fidelity_utility}
The fidelity utility of a channel $\cN$ is $F(\cN) \coloneqq  \min_{\rho \in \cD(\cH)} F\big(\cN(\rho),\rho\big),$
where $F$ is the fidelity defined in~\eqref{eq:fidelity}.
 The trace-distance and anti-trace-distance utility are defined as 
    $ T(\cN) \coloneqq  \max_{\rho \in \cD(\cH)} T(\cN(\rho), \rho)$ and $\widehat T(\cN) \coloneqq  1 - T(\cN),$ respectively.
\end{definition}

 In the following discussion we use the notation $\cN_U \coloneqq \cU^\dag \circ \cN \circ \cU = U^\dagger \cN(U\cdot U^\dagger)U$, where $\cU=U\cdot U^\dagger$ is a unitary channel. Also, recall the twirled channel $\cN_G$ with respect to the Haar measure on a compact group $G$ and two unitary representations of $G$ at the input and output of $\cN$, as given in Definition~\ref{def:twirled_channel}.

\begin{proposition}
    \label{prop:N_u_N_G_private}
Let $\cN$ be a quantum channel. Then $\cN$ is $(\varepsilon,\delta)$-QLDP if and only if $\cN_U$ is.
Furthermore, if $\cN$ is $(\varepsilon,\delta)$-QLDP, then so is $\cN_G$.
\end{proposition}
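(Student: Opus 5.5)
We work directly with the operational form of Definition~\ref{def:QLDP}, i.e., with inequality~\eqref{eq:QLDP-def} over all pairs of states and all effects $0\le M\le I$. Unitary conjugation preserves both the set of states and the set of effects. This makes the first claim a change of variables and the second a convexity argument.

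\textbf{Step 1 (unitary conjugation).} Suppose $\cN$ is $(\varepsilon,\delta)$-QLDP, and fix states $\rho,\sigma$ and an effect $0\le M\le I$. By cyclicity of the trace,
\begin{equation}
\Tr[M\cN_U(\rho)] = \Tr[M U^\dagger \cN(U\rho U^\dagger) U] = \Tr[(UMU^\dagger)\,\cN(U\rho U^\dagger)].
\end{equation}
Here $M' \coloneqq UMU^\dagger$ satisfies $0\le M'\le I$, and $\rho' \coloneqq U\rho U^\dagger$ and $\sigma' \coloneqq U\sigma U^\dagger$ are states. Applying~\eqref{eq:QLDP-def} for $\cN$ to the triple $(\rho',\sigma',M')$ and undoing the substitution gives $\Tr[M\cN_U(\rho)]\le e^\varepsilon \Tr[M\cN_U(\sigma)]+\delta$. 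For the converse, note $\cN = (\cN_U)_{U^\dagger}$ and apply the same argument with $U^\dagger$ in place of $U$.

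The same computation works when the input and output unitaries differ, i.e., for $\Ad{V^\dagger}\circ\cN\circ\Ad{U}$ with $U=U_A(g)$ and $V=V_B(g)$. In that case the effect is conjugated by $V$ and the states by $U$. Hence every channel $\cN_g \coloneqq \Ad{V_B(g)^\dagger}\circ\cN\circ\Ad{U_A(g)}$ appearing in Definition~\ref{def:twirled_channel} is $(\varepsilon,\delta)$-QLDP.

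\textbf{Step 2 (twirl as a convex mixture).} Fix $\rho,\sigma,M$. By linearity of trace and integral, and since $\mu$ is a probability measure,
\begin{equation}
\Tr[M\cN_G(\rho)] = \int_G \Tr[M\cN_g(\rho)]\,\mathrm d\mu(g) \le \int_G \Big(e^\varepsilon \Tr[M\cN_g(\sigma)]+\delta\Big)\mathrm d\mu(g) = e^\varepsilon\Tr[M\cN_G(\sigma)]+\delta.
\end{equation}
The inequality uses Step 1 pointwise in $g$, and the final equality uses $\int_G \mathrm d\mu = 1$. This is exactly~\eqref{eq:QLDP-def} for $\cN_G$.

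\textbf{Main obstacle.} The only real subtlety is justifying the interchange of trace and integral in Step 2. The integrand $g\mapsto \Tr[M\cN_g(\rho)]$ is a continuous, bounded function on a compact group, because representations of compact groups can be taken continuous. The integral defining $\cN_G$ is therefore a Bochner integral in a finite-dimensional space, and linear functionals commute with it.

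One could instead argue via joint convexity of the hockey-stick divergence in~\eqref{eq:equivalent_HS_QLDP_delta}. The effect-based formulation avoids needing that property. We also need $\cN_G$ to be a valid channel, which holds because CPTP maps form a closed convex set.
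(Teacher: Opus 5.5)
Your proof is correct, and it differs from the paper's route only in working with the defining inequality \eqref{eq:QLDP-def} rather than the hockey-stick characterization \eqref{eq:equivalent_HS_QLDP_delta}.

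The paper gets the first claim from unitary invariance of $E_\gamma$ (via data processing). It gets the second from joint convexity of $E_\gamma$ (cited from Hirche et al.), bounding $E_\gamma(\cN_G(\rho)\Vert\cN_G(\sigma))$ by the Haar average of $E_\gamma(\cN_{g}(\rho)\Vert\cN_{g}(\sigma))$. You instead make the change of variables $(\rho,\sigma,M)\mapsto(U\rho U^\dagger,U\sigma U^\dagger,UMU^\dagger)$ and then integrate a pointwise linear inequality against the probability measure $\mu$.

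These are dual forms of the same idea. Joint convexity of $E_\gamma(\rho\Vert\sigma)=\sup_{0\le M\le I}\Tr[M(\rho-\gamma\sigma)]$ is proved exactly by your averaging step for fixed $M$, followed by taking the supremum. The paper's version is shorter given the cited lemmas and matches the $E_\gamma$ language used elsewhere.

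Yours has a few advantages:
\begin{itemize}
\item It is self-contained: it needs neither the equivalence \eqref{eq:equivalent_HS_QLDP_delta} nor an integral (rather than finite-sum) form of joint convexity.
\item It handles distinct input and output representations $U_A(g)$, $V_B(g)$ explicitly, whereas the paper's appendix loosely writes $\cN_{U_g}$ for $V_g^\dagger\cN(U_g\cdot U_g^\dagger)V_g$.
\item It checks that $\cN_G$ is a channel, which the paper takes for granted.
\end{itemize}
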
 
The proof of Proposition~\ref{prop:N_u_N_G_private} uses the unitary invariance and the joint convexity of hockey-stick divergence. See~\Cref{App:prop_N_U_N_G_private} for the complete proof. 

Also, note that the channels $\cN_U$ and $\cN_g$ are practically relevant: these channels occur when unitary channels act together with private (noisy) channels, and the twirled channel $\cN_g$ corresponds to the channel when these unitary channels are randomly chosen. 

\begin{lemma}
\label{lem:twirl-fidelity-qdps}
Let $\cN$ be a quantum channel and consider the twirled channel $\cN_G = \int_G \mathrm{d}\mu(g)\, \cN_{U_g}$ with respect to a fixed unitary representation $g\mapsto U_g$. Then, $  F(\cN_G) \ge F(\cN).$
Furthermore, we also have $T(\cN) \geq T(\cN_{G})$.
\end{lemma}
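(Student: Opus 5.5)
The plan is to combine two facts: the utility of each conjugated channel $\cN_{U_g}$ coincides with that of $\cN$ (unitary invariance), and the two utility functionals behave well under convex combinations of channels (concavity of fidelity in its first argument, convexity of the trace norm). Since $\cN_G=\int_G \mathrm{d}\mu(g)\,\cN_{U_g}$ is an average of the $\cN_{U_g}$, these two facts together give both inequalities.

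\textbf{Step 1 (unitary invariance).} For any unitary $U$ and state $\rho$, set $\rho' = U\rho U^\dagger$. Then
\[
F(\cN_U(\rho),\rho) = F\big(U^\dagger \cN(\rho') U,\, U^\dagger \rho' U\big) = F(\cN(\rho'),\rho'),
\]
because fidelity is invariant under simultaneous unitary conjugation of both arguments. As $\rho$ ranges over $\cD(\cH)$, so does $\rho'$. Hence $F(\cN_U)=F(\cN)$. The same argument, using unitary invariance of the trace norm, gives $T(\cN_U)=T(\cN)$.

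\textbf{Step 2 (fidelity).} Fix an arbitrary $\rho$. The map $\omega\mapsto F(\omega,\rho)$ is concave; this follows from joint concavity of the root fidelity $\sqrt{F}$ together with the fact that squaring a nonnegative concave function in one argument with the other fixed... Here we have to be careful. The root fidelity $\sqrt{F}$ is jointly concave, while $F$ itself is concave in each argument separately; the cleanest justification is Uhlmann's theorem, which writes $F(\cdot,\rho)$ as a maximum over purifications of squared overlaps. I would therefore invoke the known concavity of $F(\cdot,\rho)$ for fixed $\rho$. This extends from finite mixtures to integrals against $\mu$, by continuity and finite dimensionality. We then obtain
\[
F(\cN_G(\rho),\rho) \ge \int_G \mathrm{d}\mu(g)\, F(\cN_{U_g}(\rho),\rho) \ge \int_G \mathrm{d}\mu(g)\, F(\cN_{U_g}) = F(\cN).
\]
The last equality uses Step 1 and the normalization of $\mu$. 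Minimizing over $\rho$ gives $F(\cN_G)\ge F(\cN)$.

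\textbf{Step 3 (trace distance).} For fixed $\rho$, write $\cN_G(\rho)-\rho=\int_G \mathrm{d}\mu(g)\,(\cN_{U_g}(\rho)-\rho)$. The triangle inequality for the trace norm, in its integral form, then gives
\[
T(\cN_G(\rho),\rho)\le \int_G \mathrm{d}\mu(g)\, T(\cN_{U_g}(\rho),\rho)\le T(\cN).
\]
The second inequality again uses Step 1. Maximizing over $\rho$ yields $T(\cN_G)\le T(\cN)$.

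The main point needing care is Step 2: we need concavity of the squared fidelity in one argument, not merely of the root fidelity. If one only trusts joint concavity of $\sqrt F$, the inequality can instead be run at the level of $\sqrt{F}$. That gives $\sqrt{F(\cN_G(\rho),\rho)}\ge \int_G \mathrm{d}\mu(g)\,\sqrt{F(\cN_{U_g}(\rho),\rho)}\ge \sqrt{F(\cN)}$. Squaring both sides (both are nonnegative) yields the same conclusion, so this route avoids any subtlety. The remaining steps are routine.
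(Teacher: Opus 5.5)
Your proof is correct and takes essentially the same route as the paper. Unitary invariance gives $F(\cN_{U_g})=F(\cN)$ and $T(\cN_{U_g})=T(\cN)$; concavity of the fidelity in one argument (resp.\ convexity of the trace norm) applied to the Haar average, followed by optimizing over $\rho$, then yields both inequalities. Your root-fidelity fallback is also valid, since the pointwise bound $\sqrt{F(\cN_{U_g}(\rho),\rho)}\ge\sqrt{F(\cN)}$ survives integration and then squaring, and you were right to drop the aside about squaring a concave function, which would not work by itself.
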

The proof of Lemma~\ref{lem:twirl-fidelity-qdps} follows by the unitary invariance of fidelity and trace distance together with the concavity and convexity of fidelity and trace distance, respectively. See~\cref{app:Lemma_twirl_trace} for the complete proof.

Note that the proof of Lemma \ref{lem:twirl-fidelity-qdps} shows that a similar result holds even if we replace the utility based on fidelity or trace distance by a function that satisfies unitary invariance and concavity (resp.~convexity).  

{ Next, we study the optimal utility achieved by private mechanisms satisfying $(\varepsilon,\delta)$-QLDP that have the same input and output dimensions (so that $F(\cN)$ and $T(\cN)$ are well defined) denoted by }
\begin{equation} \label{eq:epsilon_Delta_private_Channel}
\mathcal{B}^{\varepsilon,\delta}_d \coloneqq \left\{ \cN : \sup_{\rho,\sigma \in \cD(\cH)} E_{e^\varepsilon}\!\left( \cN(\rho) \Vert \cN(\sigma) \right) \leq \delta, \  \cN : \cL(\cH_A) \to \cL(\cH_B) \textnormal{ s.t. } d_A=d_B=d \right\} . 
\end{equation}
\begin{theorem}[Optimal Utility of QLDP Mechanisms]
  \label{thm:optimal-utility}
  Let $\varepsilon >0$ and $\delta \in [0,1]$. We have that
\begin{align}
    \max_{\cN \in \cB^{\varepsilon, \delta}_d} F(\cN) &=   \frac{e^\varepsilon +\delta (d-1)}{e^\varepsilon +d-1} \\ 
   \min_{\cN \in \cB^{\varepsilon, \delta}_d} T(\cN) &=  \frac{(d-1)(1-\delta)}{e^\varepsilon +d-1}, 
\end{align}
where $\cB_d^{\varepsilon,\delta}$ is defined in~\eqref{eq:epsilon_Delta_private_Channel}.
In particular, 
\begin{equation}
    \max_{\cN \in \cB^{\varepsilon, \delta}_d} F(\cN)=  \max_{\cN \in \cB^{\varepsilon, \delta}_d} \widehat{T}(\cN).
\end{equation}
\end{theorem}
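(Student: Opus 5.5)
We prove the two equalities separately. In each case we show the depolarizing channel attains the value, and then show no $(\varepsilon,\delta)$-QLDP channel does better, using the twirl to reduce to depolarizing channels.

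\textbf{Achievability.} Take $\cA_{p^*}$ with $p^*=\frac{d(1-\delta)}{e^\varepsilon+d-1}$. By Proposition~\ref{prop:optimal_P_QLDP_Dep} it lies in $\cB^{\varepsilon,\delta}_d$.

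For the fidelity, concavity of $\rho\mapsto F(\cA_p(\rho),\rho)$ (joint concavity of fidelity restricted to the diagonal) puts the minimum at a pure state $\psi$. There
\[
F(\cA_p(\psi),\psi)=\langle\psi|\cA_p(\psi)|\psi\rangle=1-p+\tfrac{p}{d}=1-\tfrac{(d-1)p}{d}.
\]
At $p=p^*$ this equals $\frac{e^\varepsilon+\delta(d-1)}{e^\varepsilon+d-1}$.

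For the trace distance, $\cA_p(\rho)-\rho=p(I/d-\rho)$. Convexity puts the maximum at a pure state, where $T=p(1-1/d)$. This gives $\frac{(d-1)(1-\delta)}{e^\varepsilon+d-1}$. Since $1-\frac{(d-1)(1-\delta)}{e^\varepsilon+d-1}$ equals the fidelity value, the final identity $\max F=\max\widehat T$ follows once optimality is shown.

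\textbf{Converse.} Let $\cN\in\cB^{\varepsilon,\delta}_d$ be arbitrary. Twirl it over the full unitary group $G=U(d)$, with the same representation $U_g$ at input and output (possible since $d_A=d_B$). By Proposition~\ref{prop:N_u_N_G_private}, $\cN_G$ is still $(\varepsilon,\delta)$-QLDP. By Lemma~\ref{lem:twirl-fidelity-qdps}, $F(\cN_G)\ge F(\cN)$ and $T(\cN_G)\le T(\cN)$. So it suffices to bound the utilities of $\cN_G$.

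A $U(d)$-covariant channel commutes with all $\Ad{U}$. By Schur's lemma its action on the traceless operators is a scalar, and it must map $I/d$ to $I/d$. Hence $\cN_G=\cA_p$ for some real $p$, where $p$ may exceed $1$ (allowed up to $d^2/(d^2-1)$ by complete positivity). Proposition~\ref{prop:optimal_P_QLDP_Dep} was stated for $p\in[0,1]$, so we re-derive the privacy constraint for general $p$. For orthogonal pure $\varphi_1,\varphi_2$,
\[
\cA_p(\varphi_1)-e^\varepsilon\cA_p(\varphi_2)=(1-p)(\varphi_1-e^\varepsilon\varphi_2)+\tfrac{p}{d}(1-e^\varepsilon)I .
\]
This is diagonal in a basis containing $\varphi_1,\varphi_2$. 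For $p\le1$ the positive part comes only from the $\varphi_1$ eigenvalue, giving $E_{e^\varepsilon}=(1-p)+\frac{p}{d}(1-e^\varepsilon)$. Requiring this to be at most $\delta$ yields exactly $p\ge p^*$. If $p>1$, then $p\ge p^*$ holds trivially, since $p^*\le1$ because $d\le e^\varepsilon+d-1$.

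In either case the computed utilities are monotone in $p$. For $p\ge0$, $F(\cA_p)=1-\frac{(d-1)p}{d}$ and $T(\cA_p)=\frac{(d-1)|p|}{d}$. The fidelity expression is decreasing in $p$, and $|p|$ is increasing on $p\ge0$. Therefore $F(\cN)\le F(\cA_{p^*})$ and $T(\cN)\ge T(\cA_{p^*})$.

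The main obstacle is this converse step. We must cleanly justify that the $U(d)$-twirl of an arbitrary, possibly non-unital channel is exactly a depolarizing channel. We must also handle carefully the range $p<0$ or $p>1$, where the fidelity and trace formulas need re-checking, for instance via the pure-state computation with the sign of $1-p$. For $p<0$ the channel is not private at all when $\delta<1$, because its $E_{e^\varepsilon}$ exceeds $1-e^\varepsilon p/d\cdots$; it is cleaner to just note that the constraint forces $p\ge p^*\ge0$.
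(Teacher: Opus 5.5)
Your proposal is correct and follows the paper's proof. Achievability uses $\cA_{p^*}$ with Proposition~\ref{prop:optimal_P_QLDP_Dep}, and the converse uses the $\mathbb{U}(d)$-twirl with Proposition~\ref{prop:N_u_N_G_private} and Lemma~\ref{lem:twirl-fidelity-qdps}, reducing to a depolarizing channel whose parameter must satisfy $q\ge p^*$. Your Schur's-lemma identification of the twirl as depolarizing, and your observation that any $p>1$ trivially satisfies $p\ge p^*$, fill in details the paper leaves implicit. One small correction: the pure-state reduction for $F(\cA_p)$ should invoke joint concavity of the root fidelity $\sqrt{F}$, since $F$ itself is not jointly concave.
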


\begin{proof} 
We first observe that for a depolarizing channel $\cA_p$, we have $F(\cA_p) = 1-\frac{p(d-1)}{d} 
$. 
Recall, by~Proposition~\ref{prop:optimal_P_QLDP_Dep}, we have that $\cA_p \in \mathcal{B}^{\varepsilon,\delta}_d$ when $ p\geq \frac{d(1-\delta)}{e^\varepsilon + d-1}\eqqcolon p^*$. Also, note that $F(\cA_p)$ is maximized 
when $p=p*$ for $\cA_p \in \mathcal{B}^{\varepsilon,\delta}_d$. Then, by choosing $\cN= \cA_{p*}$, we have $\max_{\cN\in\mathcal{B}^{\varepsilon,\delta}_d} F(\cN) \geq 1-p^*(1-1/d). $ To obtain the matching upper bound, we recall that for a channel $\cM \in \mathcal{B}^{\varepsilon,\delta}_d$ achieving the maximum, the twirled channel $\cM_{\mathbb{U}(d)}$ is a depolarizing channel $\cA_q$, which also belongs to $\mathcal{B}^{\varepsilon,\delta}_d$ by Proposition~\ref{prop:N_u_N_G_private}. Using Lemma~\ref{lem:twirl-fidelity-qdps}, we have $F(\cM) \leq F(\cM_{\mathbb{U}(d)}) \leq 1-p^*(1-1/d)$, leading to the desired matching upper bound. See~\cref{app:proof_optimal_utility_generic_d} for the complete proof.    
\end{proof}

\bigskip

Previously, we discussed the optimal utility of private channels that have the same input and output dimensions. However, one could apply a private channel that has different input-output dimensions as well. In that case, to compare its closeness to the original state, for the utility metric to make sense, we have to allow post-processing of the output to bring it back to the input dimension. In fact, this is a quite natural setting in private data-processing, since the adversaries or the data observers may apply a post-processing mechanism to optimize their utility.  With that in mind, we consider arbitrary channels $\cN\colon A\to B$ (with $d_A$ and $d_B$ potentially different) and define the following alternative utility function:
\begin{equation}
    \tilde{F}(\cN)= \max_{\cR_{B \to A}} \min_{\rho \in \cD(\cH_A)} F(\cR \circ \cN(\rho), \rho).
\end{equation}
where we optimize over post-processing channels $\cR_{B \to A}$. Note that 
$\cR \circ \cN$ is a channel with equal input and output dimensions.
 Therefore, \Cref{thm:optimal-utility} implies that 
\begin{equation}
    \tilde{F}(\cN) \leq   \max_{\cM \in \cB^{\varepsilon, \delta}_d} F(\cM) =   \frac{e^\varepsilon +\delta (d-1)}{e^\varepsilon +d-1}.
\end{equation}
Also, when optimizing over $\cN$ satisfying $(\varepsilon, \delta)$-QLDP, we get 
\begin{equation}
   \max_{\cN \in \cB^{\varepsilon,\delta}} \tilde{F}(\cN) \leq   \frac{e^\varepsilon +\delta (d-1)}{e^\varepsilon +d-1},
\end{equation}
where $\mathcal{B}^{\varepsilon,\delta}$ is defined in~\eqref{eq:private-channels}.
In fact, this upper bound is achievable by choosing $\cN$
to be a depolarizing channel with $p= d(1-\delta)/(e^\varepsilon +d-1)$
and $\cR$ as the identity channel, showing that even in this operational post-processing setting, the depolarizing channel achieves the optimal utility. Thus, we have 
\begin{equation} \label{eq:utility_optimized_post_processing}
 \max_{\cN \in \cB^{\varepsilon,\delta}}  \tilde{F}(\cN)=  \max_{\cN \in \cB^{\varepsilon,\delta}}  \max_{\cR_{B \to A}} \min_{\rho \in \cD(\cH_A)} F(\cR \circ \cN(\rho), \rho) =     \frac{e^\varepsilon +\delta (d-1)}{e^\varepsilon +d-1}.  
\end{equation}

\bigskip
\noindent \textbf{Privacy-Utility Tradeoffs:}
 In Figure~\ref{fig:eps-optF-two-plots}, we plot the optimal utility that we obtained analytically in~\Cref{thm:optimal-utility} and~\eqref{eq:utility_optimized_post_processing} with respect to privacy parameters $(\varepsilon,\delta)$ and the system dimension $d \geq 2$. We observe that optimal utility grows when $\varepsilon$ increases, showcasing a privacy-utility tradeoff (recall that an increase in $\varepsilon$ corresponds to a decrease in the privacy level). Also, from Figure (a), we see that for small $\varepsilon$ (high privacy regime), it would be possible to improve the utility if we are willing to tune the parameter $\delta$. From (b), we see that the utility drops when the system size increases, highlighting that the optimal private mechanisms when $d$ is high reduce the optimal utility, which is dominant in the high privacy regime.  
     \begin{figure}[t]
  \centering
  \begin{minipage}[t]{0.5\textwidth}
    \centering
    \includegraphics[width=\linewidth]{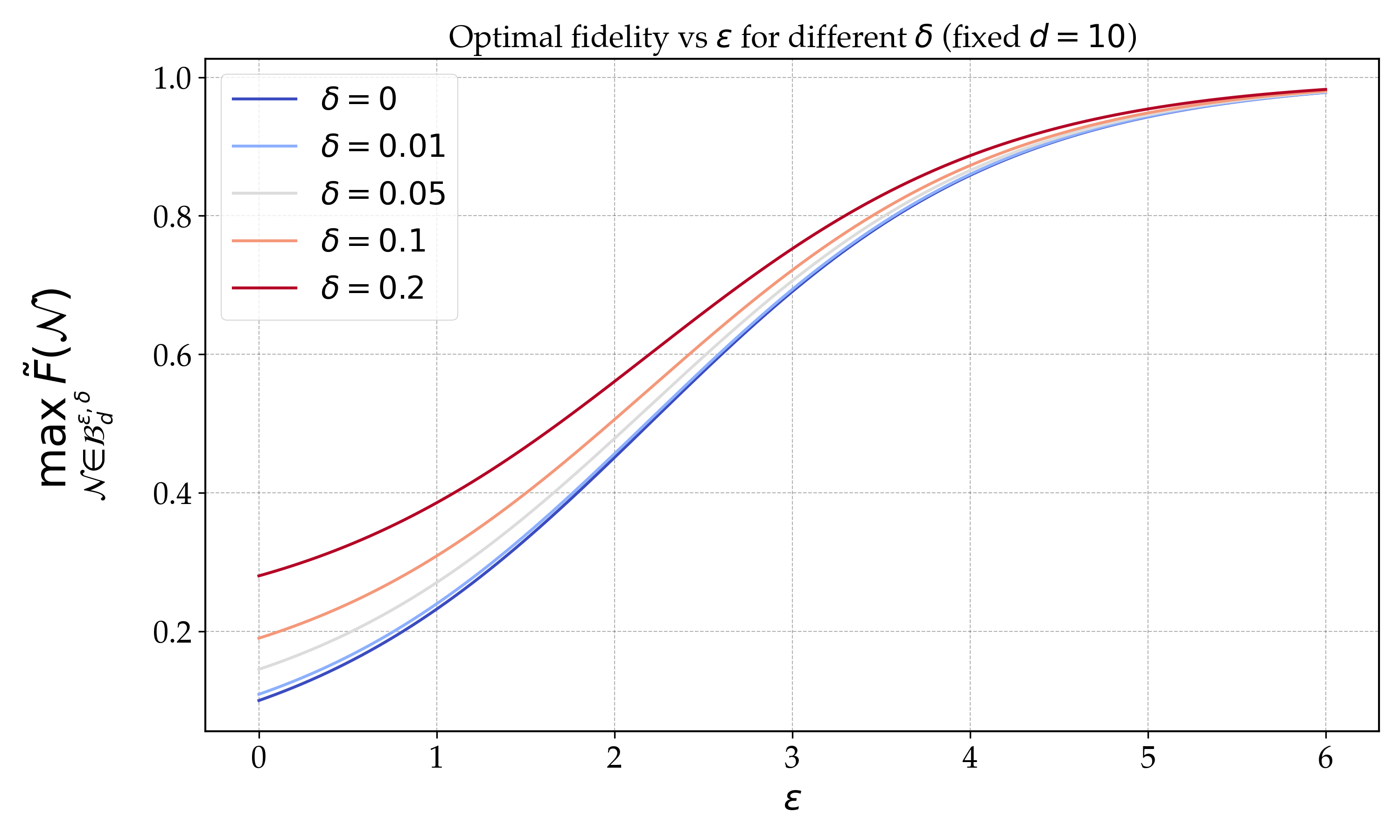}
    (a)
  \end{minipage}\hfill
  \begin{minipage}[t]{0.5\textwidth}
    \centering
\includegraphics[width=\linewidth]{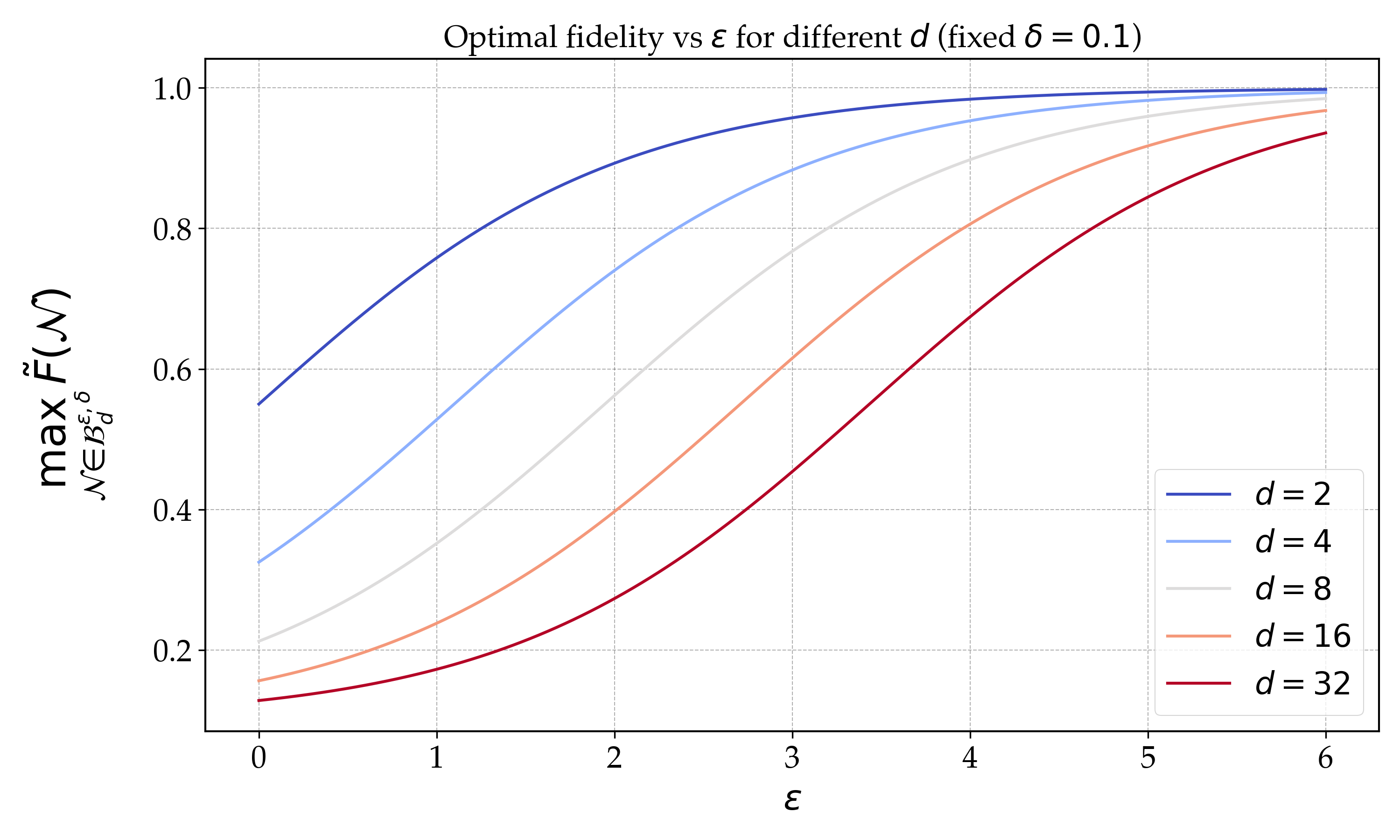}
    (b)
  \end{minipage}
  \caption{Optimal utility under $(\varepsilon,\delta)$-QLDP (\Cref{thm:optimal-utility} and~\eqref{eq:utility_optimized_post_processing}). 
    (a) $\varepsilon$ vs.\ optimal fidelity for varying $\delta$ (fixed $d=10$). 
    (b) $\varepsilon$ vs.\ optimal fidelity for varying $d$ (fixed $\delta=0.1$).}
  \label{fig:eps-optF-two-plots}
\end{figure}

\section{Privately Estimating Expectation of Observables} \label{Sec:private_observable_estimation}

In the previous section, we studied privacy-utility tradeoffs with generic utility measures, which are designed for the one-shot setting in which utility is measured with respect to one copy of the privatized state. 
However, these generic utility measures are not always the best measures in specific use cases of the privatized data, and when we have access to finitely many copies of the privatized state. This is for example the case when estimating a property of a quantum state using $n$ copies of the state. Here, we study the \emph{private} version of this property estimation task: we aim to learn a given property of a state with sufficient accuracy, but without full knowledge of the state.

We now describe this task in more detail. Instead of getting $n$ copies of a state $\rho$, we get $n$ copies of the privatized states $\cA(\rho)$. This way, we make sure that our original inputs are private. With that, we ask the question how well one can estimate $\Tr[O\rho]$, with $O$ being an observable, having only access to the privatized samples $\cA(\rho)$. Throughout, we use the assumption that we know the full mathematical description of the observable, but we do not know the mathematical description of the states, only having access to the samples of the states. 

In terms of this task, we need to estimate the observable to a sufficient accuracy with high probability with the minimum number of samples of privatized data. First, let us define our accuracy criterion and the optimal sample complexity in this task:
Let $\hat{E}(\cA, \rho,O,n)$ be the estimate for $\Tr[O \rho]$ obtained by a procedure $\cP_O$ that takes $\left(\cA(\rho)\right)^{\otimes n}$ as input such that the following condition is satisfied: $\left| \Tr[O\rho]-\hat{E}(\cA, \rho,O,n)\right| \leq \beta$
with probability at least $1-\eta$, for $\beta>0$ and $\eta \in (0,1)$.
We define the optimal sample complexity for estimating the expectation value while ensuring $\varepsilon$-QLDP as follows:
\begin{equation}\label{eq:learning_exp_SC_def}
    n^*(O, \beta, \eta, \varepsilon,\delta) \coloneqq \inf_{\cA \in \cB^{\varepsilon,\delta}, \cP_O}\left\{n \in \mathbb{N}:  \left| \Tr[O\rho]-\hat{E}(\cA, \rho,O,n)\right| \leq \beta \  \textnormal{w.p.} \ 1-\eta \ \forall \rho \in \cD \right\},
\end{equation}
where the optimization is over protocols $\cP_O$ and the private channels $\cA \in \cB^{\varepsilon,\delta}$ defined in~\eqref{eq:private-channels}.
When $\delta=0$, we use the shorthand $ n^*(O, \beta, \eta, \varepsilon)  \equiv  n^*(O, \beta, \eta, \varepsilon,\delta=0) $.

There are two main questions related to characterizing optimal sample complexities: First, we aim to find $\cA \in \cB^{\varepsilon}$ and a protocol $\cP_O$ to obtain an achievable upper bound on the optimal sample-complexity. Second, we aim to obtain a non-trivial lower bound on the minimum number of samples needed for any $\cA\in \cB^{\varepsilon,\delta}$ and $\cP_O$. In fact, the goal is to obtain matching upper and lower bounds for as many parameters as possible with the least number of assumptions. This question has been studied in~\cite[Proposition~33]{gallage2025theory}, however, those results only hold for a certain class of observables with $0 \leq O \leq I$, and a certain distinguishability criterion on the observable and states of interest with respect to the accuracy level. Moreover, the sample complexity therein is only tight in terms of the privacy parameter, but not in terms of the accuracy parameters.

\subsection{Lower Bound}

To obtain a lower bound on the estimation sample complexity, 
we make use of the lower bounds on private quantum hypothesis testing (QHT) studied in~\cite{nuradha2024contraction,Christoph2024sample}, which is also given in~\Cref{App:Private_QHT} for completeness.

\begin{proposition} \label{prop:LB_1}
Let $ 0<\beta \leq (\lambda_{\max}(O) -\lambda_{\min}(O))/4  $, where $O$ is the observable of interest with $\lambda_{\max}(O) > \lambda_{\min}(O)$, and $ \eta \in (0,1/4) $ and $\varepsilon >0$.
Then, we have that 
\begin{equation}
      n^*(O, \beta, \eta, \varepsilon)  \geq \frac{\ln\! \left( \frac{1}{4 \eta (1-\eta)}\right) e^\varepsilon  (\lambda_{\max}(O) -\lambda_{\min}(O))^2}{ 32(e^\varepsilon -1)^2  \beta^2 }.
\end{equation}
Furthermore for $\varepsilon \in (0,1)$, we have 
\begin{equation}
     n^*(O, \beta, \eta, \varepsilon)  \geq  \Omega\!\left( \frac{(\lambda_{\max}(O) -\lambda_{\min}(O))^2 \ln(1/\eta)}{ \varepsilon^2 \beta^2} \right).
\end{equation}    
\end{proposition}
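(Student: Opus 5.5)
Our plan is to reduce private estimation to private quantum hypothesis testing between two well-chosen states, and then invoke the known lower bound on the sample complexity of $\varepsilon$-QLDP hypothesis testing recalled in~\Cref{App:Private_QHT}.

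\textbf{Choice of the two states.} Let $|\psi_{\max}\rangle$ and $|\psi_{\min}\rangle$ be eigenvectors of $O$ for $\lambda_{\max}(O)$ and $\lambda_{\min}(O)$; they are orthogonal. Set $\Delta\coloneqq \lambda_{\max}(O)-\lambda_{\min}(O)$ and $\tau \coloneqq 4\beta/\Delta \in (0,1]$. Define
\begin{align}
\rho_\pm \coloneqq \tfrac12\big(\psi_{\max}+\psi_{\min}\big) \pm \tfrac{\tau}{2}\big(\psi_{\max}-\psi_{\min}\big),
\end{align}
which are valid states because $\tau\le 1$. Then $\Tr[O\rho_+]-\Tr[O\rho_-] = \tau\Delta = 4\beta > 2\beta$. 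Consequently any protocol $(\cA,\cP_O)$ achieving accuracy $\beta$ with failure probability $\eta$ for all states yields a test for $\rho_+$ versus $\rho_-$ from $n$ copies of $\cA(\rho_\pm)$. The test thresholds the estimate at the midpoint, and both of its error probabilities are at most $\eta$. We also record $\tfrac12\|\rho_+-\rho_-\|_1 = \tau = 4\beta/\Delta$, and more usefully that both states are supported on a qubit subspace, with eigenvalues $(1\pm\tau)/2$ in the same basis.

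\textbf{Applying the private QHT bound.} We invoke the lower bound for $\varepsilon$-QLDP symmetric hypothesis testing. Its form should be that $n$ is at least $\ln\!\big(\tfrac{1}{4\eta(1-\eta)}\big)$ divided by an upper bound on the divergence between $\cA(\rho_+)$ and $\cA(\rho_-)$. That divergence is a Hellinger/Bhattacharyya-type or $\chi^2$-type quantity, controlled through the contraction under $\varepsilon$-QLDP channels, roughly $\frac{(e^\varepsilon-1)^2}{e^\varepsilon}\,T(\rho_+,\rho_-)^2$ up to constants. The $\ln(1/(4\eta(1-\eta)))$ factor comes from the standard fidelity-based bound on the error probability: error at most $\eta$ forces $F(\cA(\rho_+)^{\otimes n},\cA(\rho_-)^{\otimes n}) \le 4\eta(1-\eta)$. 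Multiplicativity of fidelity then gives $n\,(-\ln F(\cA(\rho_+),\cA(\rho_-))) \ge \ln\frac{1}{4\eta(1-\eta)}$.

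\textbf{Constants and main obstacle.} Substituting $T=4\beta/\Delta$ into a contraction bound of the form $-\ln F(\cA(\rho),\cA(\sigma)) \le \frac{(e^\varepsilon-1)^2}{2e^\varepsilon}T(\rho,\sigma)^2$ gives $n \ge \ln\!\big(\tfrac{1}{4\eta(1-\eta)}\big)\, e^\varepsilon\Delta^2/\big(32(e^\varepsilon-1)^2\beta^2\big)$, which is the claimed bound. The hard step is this matching of constants. We must check that the appendix result indeed bounds $-\ln F$ (or the relevant divergence) by $\frac{(e^\varepsilon-1)^2}{2e^\varepsilon}T^2$ uniformly over all $\varepsilon$-QLDP channels, not only over depolarizing channels. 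If the available bound is instead stated with $T(\rho,\sigma)$ to the first power or with different constants, we would adjust $\tau$, since the only constraint is a gap larger than $2\beta$.

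\textbf{The $\varepsilon\in(0,1)$ regime.} For the asymptotic form, use $e^\varepsilon-1\le (e-1)\varepsilon$ and $e^\varepsilon\ge 1$. Also, for $\eta<1/4$ we have $\ln\frac{1}{4\eta(1-\eta)} \ge c\ln(1/\eta)$ for an absolute constant $c>0$, by comparing the two expressions, both of which tend to $\infty$ as $\eta\to0$ and are positive on $(0,1/4)$. Together these yield the $\Omega\big(\Delta^2\ln(1/\eta)/(\varepsilon^2\beta^2)\big)$ statement.
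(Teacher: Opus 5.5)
Your proposal matches the paper's proof: your $\rho_\pm$ are the paper's $\rho_0,\rho_1$ (with $\tau=2\alpha'$), the midpoint-threshold reduction is the same, and the bound follows from Theorem~\ref{thm:bounds_sample_C_private} with $p=q=1/2$, $\alpha=\eta$ and $T(\rho_+,\rho_-)=4\beta/\Delta$. The one thing to fix is your guessed contraction step: with $T=4\beta/\Delta$, the inequality $-\ln F\le\frac{(e^\varepsilon-1)^2}{2e^\varepsilon}T^2$ would give the constant $8$, not $32$, and it is a factor $4$ stronger than what the cited theorem supports; cite the theorem's bound $n\ge \ln\!\big(\frac{1}{4\eta(1-\eta)}\big)e^\varepsilon/\big(2(e^\varepsilon-1)^2T^2\big)$ directly instead (in your fidelity language this corresponds to $-\ln F\le\frac{2(e^\varepsilon-1)^2}{e^\varepsilon}T^2$), which yields exactly the stated $1/32$.
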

    To prove the lower bound in Proposition~\ref{prop:LB_1}, we first show that if there exists a protocol that estimates $\Tr[O\rho]$ up to the given accuracy conditions, one can use that protocol to devise a hypothesis testing protocol that discriminates between two selected states. Then, for this setting, the lower bounds on private QHT apply (\cref{thm:bounds_sample_C_private}) and yield a lower bound to the private observable estimation task. See~\cref{app:estimation_LB_proof} for a detailed proof.
    
We note here that we always have $\lambda_{\max}(O) -\lambda_{\min}(O) \geq 0$ for any observable $O$, and if $\lambda_{\max}(O) =\lambda_{\min}(O)$ then $O=cI$ with $c = \lambda_{\min}(O) = \lambda_{\max}(O) \in \mathbb{R}$. In the latter case we can perfectly estimate $\Tr[O\rho]$ without even having access to the (fully private) states, since it evaluates to $\Tr[O\rho] =c$. Thus, we only consider the case $\lambda_{\max}(O) -\lambda_{\min}(O) > 0$ in the sequel.

\subsection{Upper Bounds}
Next, we devise private mechanisms and protocols for the task of estimating $\Tr[O\rho]$ with access to $n$ copies of $\cA(\rho)$, where $\cA \in \cB^{\varepsilon,\delta}$.

Consider the representation of the observable $O$ in terms of Pauli operators as $O= \sum_P {\alpha_P}P $ with $P \in \cP_m$ being a Pauli operator on $m$-qubits (see~\eqref{eq:pauli_operators}) and $\alpha_P \in \mathbb{R}$, and denote $S=\sum_P |\alpha_P|$.
We define a probability distribution $p$ over the set of Paulis $\cP_m$ with $p(P)=|\alpha_P|/S$.

Let us then consider the following channel denoted by $\cA_q^O$: 
\begin{equation} \label{eq:channel_private_for_observable}
    \cA_q^O(\rho)= \sum\nolimits_{P} p(P) \ \cA_q \!\left( \cM_P (\rho)\right) \otimes |P \rangle\!\langle P|,
\end{equation}
where $\cA_q$ is a qubit depolarizing channel, the $P$ register has dimension $4^m$ (with orthonormal basis $\{ |P \rangle \}$), and $   \cM_P(\rho) = \Tr\!\left[ \left(\frac{I+P}{2}\right) \rho \right] | 0\rangle \!\langle 0| + \Tr\!\left[ \left(\frac{I-P}{2}\right) \rho \right] | 1\rangle \!\langle 1|.$
  The channel $\cA_q^O$ is a composition of the following protocol, leading to the output of the channel to be classical with two registers (the first one with either $0$ or $1$, and the second one indicating the chosen Pauli (P)):
\begin{enumerate}
  \item Sample a Pauli $P$ from $\mathcal{P}_m$ according to $p(P)=|\alpha_P|/S$ 
  and store it in a separate register.
  \item Apply the measurement channel $\cM_P$ corresponding to $P$ to $\rho$. 
  \item Apply a depolarizing channel with parameter $q=2(1-\delta)/(1+e^\varepsilon) $ to the output of $\cM_P(\rho)$.
\end{enumerate}

\begin{lemma} \label{lem:A_O^q_private}
The channel $\cA_q^O$ in~\eqref{eq:channel_private_for_observable} satisfies $(\varepsilon,\delta)$-QLDP. 
\end{lemma}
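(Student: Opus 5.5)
The plan is to use the hockey-stick characterization \eqref{eq:equivalent_HS_QLDP_delta}. It suffices to show $E_{e^\varepsilon}(\cA_q^O(\rho)\Vert\cA_q^O(\sigma))\le\delta$ for all states $\rho,\sigma$.

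First I would exploit the block structure of the output. Both $\cA_q^O(\rho)$ and $\cA_q^O(\sigma)$ are block diagonal in the orthonormal basis $\{|P\rangle\}$ of the classical Pauli register, and the weights $p(P)$ do not depend on the input. Hence the positive part of $\cA_q^O(\rho)-e^\varepsilon\cA_q^O(\sigma)$ decomposes blockwise, and
\begin{equation}
E_{e^\varepsilon}\!\left(\cA_q^O(\rho)\Vert\cA_q^O(\sigma)\right)=\sum_P p(P)\,E_{e^\varepsilon}\!\left(\cA_q(\cM_P(\rho))\Vert\cA_q(\cM_P(\sigma))\right).
\end{equation}
Since $\sum_P p(P)=1$, it is enough to bound each term by $\delta$. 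Each $\cM_P$ maps states to qubit states, so each term is at most
\[
\sup_{\omega,\tau\in\cD(\mathbb{C}^2)}E_{e^\varepsilon}\!\left(\cA_q(\omega)\Vert\cA_q(\tau)\right).
\]

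Next I would apply Proposition~\ref{prop:optimal_P_QLDP_Dep} with $d=2$. The qubit depolarizing channel $\cA_q$ is $(\varepsilon,\delta)$-QLDP if and only if
\[
q\ge \frac{2(1-\delta)}{e^\varepsilon+1}.
\]
Our choice $q=2(1-\delta)/(1+e^\varepsilon)$ meets this with equality. It is a valid depolarizing parameter, since $q\le 1\le 2$ because $e^\varepsilon\ge1$. So each term is at most $\delta$, and therefore so is the average. One could equally phrase this step as post-processing: $\cA_q\circ\cM_P$ is $(\varepsilon,\delta)$-QLDP because $\cA_q$ is, and the hockey-stick divergence satisfies data processing.

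The only mildly delicate step is the block decomposition of $E_\gamma$. For block-diagonal operators $\bigoplus_P A_P$, the positive part is $\bigoplus_P (A_P)_+$, so the trace splits as a sum over blocks. That gives the identity above.
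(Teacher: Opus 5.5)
Your proof is correct and is essentially the paper's argument. The paper uses the measurement form of the definition, writing $\Tr[M_{AB}\cA_q^O(\rho)]=\sum_P p(P)\Tr[M'_P\,\cA_q(\cM_P(\rho))]$ with $M'_P=\langle P|M_{AB}|P\rangle$; this is the dual of your blockwise decomposition of $E_{e^\varepsilon}$. It then applies Proposition~\ref{prop:optimal_P_QLDP_Dep} with $d=2$ to each branch $\cA_q\circ\cM_P$ and averages, as you do (your version yields an exact identity over the blocks where the paper only needs the inequality).
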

See~\cref{app:lemma_A_O^q_private_proof} for the proof of Lemma~\ref{lem:A_O^q_private}. 
Now, given $n$ copies of the outputs of the above stated channel $\cA_q^O$, we provide a method to estimate $\Tr[O\rho]$ with a bound on the number of samples $n$ required to achieve the desired accuracy criterion.
\begin{proposition} \label{prop:upper_1}
Let $\varepsilon >0, \delta \in [0,1]$, $\beta >0$, and $\eta \in (0,1) $. For an observable $O= \sum_{P} \alpha_P P$ with $S \coloneqq \sum_P |\alpha_P|$, there exists a protocol that requires   \begin{equation}
         n  \geq \frac{2 \ S^2 (e^\varepsilon +1)^2}{\beta^2 (e^\varepsilon-1+2\delta)^2} \ln\!\left(\frac{2}{\eta}\right)
     \end{equation} copies of $\cA_q^O$ in~\eqref{eq:channel_private_for_observable}
    to achieve $\left|\Tr[O\rho]-\hat{E}(\cA, \rho,O,n)\right| \leq \beta \  \textnormal{w.p.} \ 1-\eta$. Moreover, for $\varepsilon \in (0,1)$ and $\delta=0$, we have 
\begin{equation}
          n^*(O, \beta, \eta, \varepsilon)   \leq \left\lceil \frac{ 32 S^2 \ln(1/\eta)}{ \varepsilon^2 \beta^2} \right \rceil.
    \end{equation}
\end{proposition}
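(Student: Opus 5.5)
The plan is to build an unbiased estimator from the classical outputs of $\cA_q^O$ and then apply Hoeffding's inequality. Each use of $\cA_q^O$ on $\rho$ returns a pair $(P,b)$ with $P\sim p$ and $b\in\{0,1\}$. First compute the conditional law of $b$. Given $P$, the measurement $\cM_P$ outputs $0$ with probability $(1+\Tr[P\rho])/2$. A qubit depolarizing channel with parameter $q$ acting on a diagonal state keeps the bit with probability $1-q/2$ and flips it with probability $q/2$. Hence
\begin{equation}
\mathbb{E}\big[(-1)^b \,\big|\, P\big] = (1-q)\Tr[P\rho],
\end{equation}
and with $q=2(1-\delta)/(1+e^\varepsilon)$ we get $1-q = (e^\varepsilon-1+2\delta)/(e^\varepsilon+1)$.

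Next define the single-sample estimator
\begin{equation}
X \coloneqq \frac{S\,\mathrm{sgn}(\alpha_P)\,(-1)^b}{1-q}.
\end{equation}
Then
\begin{equation}
\mathbb{E}[X]=\sum_P \frac{|\alpha_P|}{S}\,S\,\mathrm{sgn}(\alpha_P)\Tr[P\rho]=\Tr[O\rho],
\end{equation}
and $X$ takes values in $[-S/(1-q),\,S/(1-q)]$, an interval of length $2S(e^\varepsilon+1)/(e^\varepsilon-1+2\delta)$. The protocol $\cP_O$ outputs the empirical mean $\hat E$ of $n$ i.i.d.\ copies $X_1,\dots,X_n$ obtained from $(\cA_q^O(\rho))^{\otimes n}$. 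These are independent because the copies are in product form. Hoeffding's inequality gives
\begin{equation}
\Pr\big[|\hat E-\Tr[O\rho]|>\beta\big]\le 2\exp\!\Big(-\frac{2n\beta^2}{(2S/(1-q))^2}\Big)=2\exp\!\Big(-\frac{n\beta^2(1-q)^2}{2S^2}\Big).
\end{equation}
Requiring the right-hand side to be at most $\eta$ yields exactly the stated condition $n\ge 2S^2(e^\varepsilon+1)^2\ln(2/\eta)/(\beta^2(e^\varepsilon-1+2\delta)^2)$. Lemma~\ref{lem:A_O^q_private} guarantees $\cA_q^O\in\cB^{\varepsilon,\delta}$, so this protocol is feasible in the definition~\eqref{eq:learning_exp_SC_def}.

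For the second claim, set $\delta=0$ and $\varepsilon\in(0,1)$, and use $e^\varepsilon-1\ge\varepsilon$ and $e^\varepsilon+1\le e+1<4$. This gives the prefactor bound $2(e^\varepsilon+1)^2/(e^\varepsilon-1)^2\le 32/\varepsilon^2$. The remaining obstacle is the constant inside the logarithm: the target has $\ln(1/\eta)$ rather than $\ln(2/\eta)$. The bound $2(e+1)^2<32$ leaves slack, namely $2(e+1)^2\approx 27.65$, so the inequality
\begin{equation}
27.65\,\ln(2/\eta)\le 32\,\ln(1/\eta)
\end{equation}
holds whenever $\eta$ is small enough (roughly $\eta\le 0.01$). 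For larger $\eta$ a refinement is needed. One option is to tighten $e^\varepsilon-1\ge\varepsilon$ to get extra room, or to apply the one-sided Hoeffding bound to each tail separately. If neither closes the gap uniformly in $\eta$, I would state the bound for $\eta$ below a small absolute constant, or as a bound up to constants; this is the step to check most carefully. In every case the number of samples $\lceil 32S^2\ln(1/\eta)/(\varepsilon^2\beta^2)\rceil$ then suffices, which upper bounds $n^*$.
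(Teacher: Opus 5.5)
Your treatment of the first claim matches the paper's proof. You use the same estimator $\frac{S}{1-q}\operatorname{sgn}(\alpha_{P_i})(-1)^{Y_i}$, the same conditional mean $(1-q)\Tr[P\rho]$, and the same two-sided Hoeffding bound $2\exp(-n\beta^2(1-q)^2/(2S^2))$.

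For the second claim, the $\ln(2/\eta)$ versus $\ln(1/\eta)$ mismatch you flagged is real, and the paper does not resolve it either. The paper bounds $(e^\varepsilon+1)/(e^\varepsilon-1)\le 4/\varepsilon$ via $\tanh(x/2)\ge x/4$. It reaches $n\ge 32S^2\ln(2/\eta)/(\beta^2\varepsilon^2)$ and then simply records this as the stated bound with $\ln(1/\eta)$. So your argument proves exactly what the paper's argument proves.

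You can close the gap in the regime where the bound is actually used ($\eta<1/4$ in Proposition~\ref{prop:optimal_SC_O}). The slack is in your constant $e^\varepsilon+1\le e+1$, not in the tails.
\begin{itemize}
\item Since $x\coth x$ is increasing, on $(0,1)$ we have $\frac{e^\varepsilon+1}{e^\varepsilon-1}=\coth(\varepsilon/2)\le\coth(1/2)/\varepsilon\approx 2.164/\varepsilon$.
\item Hoeffding therefore needs only $n\ge 9.37\,S^2\ln(2/\eta)/(\varepsilon^2\beta^2)$.
\item Using $\ln(2/\eta)\le 2\ln(1/\eta)$ for $\eta\le 1/2$, this is below $32S^2\ln(1/\eta)/(\varepsilon^2\beta^2)$. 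Keeping $\ln(2/\eta)$ exactly, the comparison survives up to about $\eta\approx 3/4$.
\end{itemize}
Handling the two tails one-sidedly will not help, because the factor $2$ is precisely the union of the two one-sided tails.

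For $\eta$ close to $1$, no refinement of this protocol works. Once $32S^2\ln(1/\eta)\le\varepsilon^2\beta^2$, the prescribed $n$ is $1$, and the single-sample estimate is $\pm S/(1-q)$. Since $|\Tr[O\rho]|\le S$, for $\delta=0$ this estimate lies at distance at least $Sq/(1-q)=2S/(e^\varepsilon-1)>S$ from $\Tr[O\rho]$. Whenever $\beta<S$ (the only nontrivial case), the success probability is therefore $0$. The literal all-$\eta$ statement with $\ln(1/\eta)$ thus needs either a different protocol for $\eta$ near $1$ or a restriction on $\eta$. The clean correction is to state $\ln(2/\eta)$, which is what the paper's own derivation gives.
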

The proof idea of Proposition~\ref{prop:upper_1} is as follows:
Assume that we are given $n$ i.i.d.\ copies of $\mathcal A_q^O(\rho)$ in~\eqref{eq:channel_private_for_observable}.
For each sample $i\in\{1,\dots,n\}$, let the observed pair be $(Y_i, P_i)$, where $Y_i\in\{0,1\}$ is the computational-basis bit obtained after applying the channels $\cA_q \circ \cM_{P_i}$ and $P_i\sim p(\cdot)$ is the Pauli label. Using the pair $(Y_i, P_i)$, we define the random variables $X_i \coloneqq  (-1)^{Y_i}\in\{+1,-1\} $ and $Z_i \coloneqq \frac{S}{1-q}\,\operatorname{sgn}(\alpha_{P_i}) X_i.$
We then choose an estimator $ \widehat{E} \coloneqq \frac{1}{n}\sum_{i=1}^n Z_i$ and show that it is an unbiased estimator (i.e., $\mathbb{E}[\widehat{E}]= \Tr[O\rho]$). Finally, the sample complexity in the statement of the proposition follows from Hoeffding's inequality. We refer to~\cref{app:prop_upper_bound_estimate_1} for the complete proof.

\subsubsection{Private Classical Shadows}
Classical shadows have been identified as an approach to estimate properties of quantum states, possibly efficiently in some settings \cite{huang2020predicting,huang2022learning}. Also, classical shadows have been studied in \cite{Koh2022classicalshadows} when noise channels act along with random unitary circuits. In this section, we discuss how we can utilize carefully designed noise levels to achieve a required privacy level while providing useful estimates of classical shadows in estimation tasks. To the best of our knowledge, this is the first work discussing private classical shadows, which may be a useful tool in private information processing with quantum systems.

First, we present the approach to obtain a classical shadow from one copy of the state $\rho$ (for an $m$-qubit state with $d=2^m$), by fixing the random unitary gates to be Clifford and the noise channel to be a depolarizing channel $\rho$: (i) Given an input state $\rho$, we sample a unitary $U$ from the set of Clifford unitaries $\mathcal{C}_m$ (see~\eqref{eq:clifford_group}) and apply the selected unitary to the input state, giving $U\rho U^\dag$ as the output; (ii) Apply a depolarizing channel with flipping parameter $\hat{p} \in [0,1]$ (note that one could use any private channel with the right parameters); (iii) Perform a computational basis measurement to get outcomes $|b\rangle\!\langle b|$ with $ b\in \{0,1\}^m$; (iv) Apply the unitary channel $U^\dag (\cdot) U$ to the classical outcomes to obtain $U^\dag |b\rangle\!\langle b| U$. 
    The composite channel after this step is as follows:
    \begin{equation}\label{eq:composite_channel_clas_shad}
        \cM_{\hat{p},d}(\rho) = \mathbb{E}_{U\sim \mathcal{C}_n}\! \left[ \sum\nolimits_{b \in\{0,1\}^m} \langle b| \cA_{\hat{p}}(U \rho U^\dag )|b \rangle \ U^\dag |b\rangle\!\langle b| U \right];
        \end{equation} 
(v) Evaluate $\hat{\rho}$, where $\hat{\rho} \coloneqq \frac{d+1} {1-\hat{p}}  U^\dag |b\rangle\!\langle b| U- \left( 1- \frac{d+1} {1-\hat{p}} \right) \frac{I}{d},$
    and send it to the central party to evaluate the expectation over the given observable $O$.

This procedure is done for all $N$ copies of $\rho$ (possibly within local systems) and the $\hat{\rho}$ is sent for the evaluation. With access to $N$ copies of $\hat{\rho_i}$ for $i=\{1,\ldots, N\}$, we follow the following steps to obtain an estimate for $\Tr[O \rho]$. Let $N=k \ell$ with $k,\ell \in \mathbb{N}$. 
\begin{enumerate}
    \item First we obtain $\hat{\rho}_{(k)} =\sum_{i= k(\ell-1)+1}^{k\ell} \hat{\rho}_i$ for $k=\{1, \ldots, N/\ell\}$. 
    \item Then release $\hat{o}$ as the estimate for $\Tr[O\rho]$ , where 
    \begin{equation}
        \hat{o} =\operatorname{median}\!\left\{ \Tr[O \hat{\rho}_{(1)}], \ldots,  \Tr[O \hat{\rho}_{(N/\ell)}]\right\}.
    \end{equation}
\end{enumerate}
By~\cite[Corollary~4.11]{Koh2022classicalshadows} (with $f=1-\hat{p}$), this procedure ensures that 
\begin{equation}
\Pr \left( \left| \Tr[O\rho]-\hat{E}(\cA, \rho,O,n)\right| \leq \beta\right) \geq 1- \eta
\end{equation}
for $\eta \in (0,1)$ and $\beta>0$ 
if 
\begin{equation} \label{eq:SC_Noisy_Classi_Shado_1}
    N \geq \frac{204  \Tr[O^2] }{\beta^2 (1-\hat{p})^2} \ln\!\left(\frac{2}{\eta}\right).
\end{equation}

Now, we need to ensure that the above mechanism executed locally to obtain $\hat{\rho}$ satisfies $(\varepsilon,\delta)$-QLDP in order to be considered a private mechanism. 
One way to accomplish this is by choosing $\hat{p}$ of the depolarizing channel as 
$\hat{p}= d(1-\delta)/(e^\varepsilon +d-1)$ by using Proposition~\ref{prop:optimal_P_QLDP_Dep} (recall that this mechanism achieves optimal utility in~\cref{thm:optimal-utility}) and recalling the data processing property of QLDP mechanisms. In particular, we need 
\begin{equation} \label{eq:trivial_dep_classical_1}
    N \geq \frac{204  \Tr[O^2] }{\beta^2 }  \left( \frac{e^\varepsilon +d-1} {e^\varepsilon -1 +d \delta} \right)^2 \ln\!\left(\frac{2}{\eta}\right)
\end{equation}
 to achieve the required accuracy guarantees when given access to privatized copies of the state (via the classical shadow approach) with $(\varepsilon, \delta)$-QLDP. However, the sample complexity $N$ in observable estimation
 scales with $1/(1-\hat{p})^2$, 
 leading to an undesired scaling of $d^2$ when $\delta=0$.
 The next result shows that this is not necessary, and the utility in~\eqref{eq:learning_exp_SC_def} can be improved to get rid of any explicit dependence on~$d$ for $\delta =0$.

\begin{proposition}[Sample Complexity of Private Classical Shadow Approach] \label{prop:upper_2}
Let $\varepsilon >0$, $\delta \in [0,1]$, $\beta >0$ and $\eta \in (0,1)$. To achieve  $\Pr \left( \left| \Tr[O\rho]-\hat{o} \right| \leq \beta\right) \geq 1- \eta$, it is sufficient to have $N$ copies of private classical shadows where
\begin{equation}
     N \geq \frac{204  \Tr[O^2] }{\beta^2 } \max \left \{ 1, \left( \frac{e^\varepsilon +d-1} {(e^\varepsilon -1+d\delta)(d+1)} \right)^2  \right \} \ln\!\left(\frac{2}{\eta}\right).
\end{equation}
This is achieved by choosing $ \hat{p} =1-  \min \left\{1, \frac{(e^\varepsilon -1 +d \delta) (d+1)}{e^\varepsilon +d-1}\right\}$ in~\eqref{eq:composite_channel_clas_shad}. Furthermore for $\delta=0$
and $\varepsilon \leq \ln(2)$, we have 
\begin{equation}
     n^*(O, \beta, \eta, \varepsilon)   \leq \left\lceil \frac{3264  \Tr[O^2] }{\beta^2 \varepsilon^2} \ln\!\left(\frac{2}{\eta}\right) \right \rceil.
\end{equation}  
\end{proposition}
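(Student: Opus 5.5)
The plan is to show that the local map producing a shadow, $\rho\mapsto \sum_b \langle b|\cA_{\hat p}(U\rho U^\dag)|b\rangle\, |U,b\rangle\!\langle U,b|$ (a classical register holding the sampled Clifford $U$ and the outcome $b$), is $(\varepsilon,\delta)$-QLDP for the larger value $1-\hat p=(e^\varepsilon-1+d\delta)(d+1)/(e^\varepsilon+d-1)$. We then plug this $\hat p$ into~\eqref{eq:SC_Noisy_Classi_Shado_1}. Since $\hat\rho$ is a deterministic function of $(U,b)$, privacy of $\hat\rho$ follows by data processing. The key point is that only the composition ``depolarize, then measure in a randomly rotated basis'' must be private, not the depolarizing channel alone as in Proposition~\ref{prop:optimal_P_QLDP_Dep}. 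The measurement is a classical channel with $d$ outcomes, and it already forgets much of the state.

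\textbf{Privacy.} Fix $U$. The outcome distribution is $P_\rho(b)=(1-\hat p)\langle b|U\rho U^\dag|b\rangle+\hat p/d$. The output is a direct sum over $U$ with common weights, so by joint convexity of $E_\gamma$ (as in Proposition~\ref{prop:N_u_N_G_private}) it suffices to bound $E_{e^\varepsilon}(P_\rho\Vert P_\sigma)$ for each fixed $U$. Writing $x_b,y_b\in[0,1]$ for the diagonal entries of $U\rho U^\dag$ and $U\sigma U^\dag$, we have
\begin{equation}
E_{e^\varepsilon}(P_\rho\Vert P_\sigma)=\sum_b\Big((1-\hat p)(x_b-e^\varepsilon y_b)-(e^\varepsilon-1)\hat p/d\Big)_+ .
\end{equation}
By convexity this is maximized at extreme points. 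The worst case concentrates the whole difference on one outcome, $x_b=1$ and $y_b=0$, because splitting the mass over several outcomes pays the offset $(e^\varepsilon-1)\hat p/d$ several times. The bound is therefore $\big((1-\hat p)-(e^\varepsilon-1)\hat p/d\big)_+$. Requiring this to be at most $\delta$ gives $\hat p\ge d(1-\delta)/(e^\varepsilon+d-1)$, which is the same threshold as before. So per-outcome privacy alone gives no gain, and the improvement must come from somewhere else.

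\textbf{Where the gain is (main obstacle).} The classical shadow already divides the signal by $d+1$: the estimator variance scales like $\Tr[O^2]/f^2$ with $f=(1-\hat p)/(d+1)$ in the channel's eigenvalue convention, not like $1/(1-\hat p)^2$. This reading matches the stated formula. With $\hat p=p^*$ the stated bound has effective factor $(1-\hat p)(d+1)$ in the denominator. I would re-derive \cite[Corollary~4.11]{Koh2022classicalshadows}: the noisy Clifford shadow channel is depolarizing with eigenvalue $f=(1-\hat p)/(d+1)$, and the median-of-means bound needs $N\gtrsim 204\Tr[O^2]/(\beta^2 (f(d+1))^2)\ln(2/\eta)$ after the correct normalization. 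I would then check carefully that the displayed~\eqref{eq:SC_Noisy_Classi_Shado_1} is consistent with this. The remaining step is to justify choosing $1-\hat p$ as large as $(e^\varepsilon-1+d\delta)(d+1)/(e^\varepsilon+d-1)$, capped at $1$.

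My first attempt at this is a sharper privacy analysis that uses the average over Cliffords \emph{before} conditioning on $U$. The adversary sees $U$, so this only helps if the released object is just $\hat\rho$, or the twirled measurement channel $\cM_{\hat p,d}$. I would compute $E_{e^\varepsilon}$ for $\cM_{\hat p,d}(\varphi_1)$ versus $\cM_{\hat p,d}(\varphi_2)$. This channel is depolarizing with parameter $1-(1-\hat p)/(d+1)$, so Proposition~\ref{prop:optimal_P_QLDP_Dep} gives privacy exactly when $(1-\hat p)/(d+1)\le 1-p^*$. That is precisely the stated choice of $\hat p$, with the $\min\{1,\cdot\}$ keeping $\hat p\ge0$.

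\textbf{Final bound.} Substituting this $\hat p$ gives the stated $N$ with the $\max\{1,\cdot\}$. For $\delta=0$ and $\varepsilon\le\ln 2$ we have $e^\varepsilon+d-1\le d+1$ and $e^\varepsilon-1\ge \varepsilon/2\cdot$(constant). This bounds the ratio by $4/\varepsilon^2$. Since $204\cdot16=3264$, the ratio should be bounded by $16/\varepsilon^2$ using the crude inequality $e^\varepsilon-1\ge\varepsilon$ together with $(e^\varepsilon+d-1)/(d+1)\le 1$, plus slack. The $\max$ is then dominated by $16/\varepsilon^2$, since $16/\varepsilon^2\ge1$, and taking the ceiling finishes the argument.
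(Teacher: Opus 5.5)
\textbf{Gap in the privacy step.} Your final argument is the paper's: the Clifford-twirled map $\cM_{\hat p,d}$ of \eqref{eq:composite_channel_clas_shad} is depolarizing with parameter $1-(1-\hat p)/(d+1)$, you apply Proposition~\ref{prop:optimal_P_QLDP_Dep} to it, and you substitute the resulting $\hat p$ into the Koh--Grewal bound. The accuracy part is fine. The privacy step has a genuine gap, and it is exactly the one you flag and then wave through.

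The state $\cM_{\hat p,d}(\rho)$ is obtained from the released data by \emph{forgetting} which snapshot occurred. So it is a post-processing of $\hat\rho$, not the other way round. Data processing therefore runs the wrong direction, and privacy of $\cM_{\hat p,d}$ says nothing about privacy of $\hat\rho$. The paper's proof only checks that $\cM_{\hat p,d}$ is $(\varepsilon,\delta)$-QLDP, so it has the same hole.

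Your remark that the averaged analysis helps ``if the released object is just $\hat\rho$'' is also false. The shadow $\hat\rho$ is an injective affine function of the stabilizer projector $|s\rangle\!\langle s|$ with $|s\rangle=U^\dagger|b\rangle$. The Clifford group acts transitively on stabilizer states, so $\Pr[s]=c\,\big(f\langle s|\rho|s\rangle+(1-f)/d\big)$, where $f=1-\hat p$ and $c$ depends on neither $\rho$ nor $s$.

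\textbf{Why the step cannot be repaired for the stated $\hat p$.} Take $\rho=|s\rangle\!\langle s|$ and $\sigma$ an orthogonal stabilizer state. The likelihood ratio at outcome $s$ is $1+df/(1-f)$. Hence for $\delta=0$ the release is $\varepsilon$-QLDP if and only if $f\le (e^\varepsilon-1)/(e^\varepsilon+d-1)=1-p^*$. This is exactly the threshold your first (correct) per-$U$ computation produced.

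For $\varepsilon<\ln 2$, the proposed $f=(e^\varepsilon-1)(d+1)/(e^\varepsilon+d-1)$ gives ratio $1+(d+1)(e^\varepsilon-1)/(2-e^\varepsilon)$. That is a privacy loss of about $(d+1)\varepsilon$ when $(d+1)\varepsilon\ll 1$. For $\varepsilon\ge\ln 2$, the proposed $\hat p=0$ yields snapshots with probability zero under one input and positive under another, so no finite $\varepsilon$ works.

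So the first part of the statement cannot be established this way. A genuinely private shadow needs $f\le 1-p^*$, and that brings back the $d^2$-dependent bound \eqref{eq:trivial_dep_classical_1}.

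\textbf{The final $n^*$ bound can be recovered another way.} It still holds, but via a different mechanism. Measure $\{\tilde O, I-\tilde O\}$ with $\tilde O=(O-\lambda_{\min}(O)I)/(\lambda_{\max}(O)-\lambda_{\min}(O))$, then apply a qubit depolarizing channel with $q=2/(e^\varepsilon+1)$, which is private by Proposition~\ref{prop:optimal_P_QLDP_Dep} with $d=2$. Hoeffding's inequality then gives sufficiency of $n\ge 8(\lambda_{\max}(O)-\lambda_{\min}(O))^2\ln(2/\eta)/(\varepsilon^2\beta^2)$. Since $(\lambda_{\max}(O)-\lambda_{\min}(O))^2\le 4\Tr[O^2]$, this is well within the claimed constant $3264$.
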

The proof of this proposition utilizes the fact that $\cM_{\hat{p},d}$ is equivalent to a depolarizing channel (see \cite[Claim~4.10]{Koh2022classicalshadows}), which can be used to show that the given $\hat{p}$ is sufficient to achieve $(\varepsilon,\delta)$-QLDP. See~\cref{app:classical_shadow_SC_improved} for a detailed proof.

\subsection{Optimality}
We now derive a tight characterization of the sample complexity for $\varepsilon \leq 1$. 

\begin{proposition} \label{prop:optimal_SC_O}
    Let $O =\sum_P \alpha_P P$ be an observable with $\lambda_{\max}(O) > \lambda_{\min}(O)$. Let $ 0<\beta \leq (\lambda_{\max}(O) -\lambda_{\min}(O))/4$, $ \eta \in (0,1/4) $ and $\varepsilon \in (0,1)$. Then, we have that 
    \begin{equation}
       \Omega\!\left( \frac{(\lambda_{\max}(O) -\lambda_{\min}(O))^2 \ln(1/\eta)}{ \varepsilon^2 \beta^2} \right) \leq    n^*(O, \beta, \eta, \varepsilon)   \leq O\!\left( \frac{ S^2 \ln(1/\eta)}{ \varepsilon^2 \beta^2} \right),
    \end{equation}
    where $S \coloneqq \sum_P |\alpha_P|$.

When $O$ is fixed, we have that 
\begin{equation}
     n^*(O, \beta, \eta, \varepsilon)   = \Theta\!\left( \frac{\ln(1/\eta)}{ \varepsilon^2 \beta^2} \right).
\end{equation}
\end{proposition}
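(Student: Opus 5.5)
The proof combines the two bounds already established. For the lower bound, invoke Proposition~\ref{prop:LB_1} directly. Its hypotheses $0<\beta\le(\lambda_{\max}(O)-\lambda_{\min}(O))/4$, $\eta\in(0,1/4)$ and $\varepsilon\in(0,1)$ coincide with those assumed here, so its second statement gives
\begin{equation*}
n^*(O,\beta,\eta,\varepsilon)\ge \Omega\!\left(\frac{(\lambda_{\max}(O)-\lambda_{\min}(O))^2\ln(1/\eta)}{\varepsilon^2\beta^2}\right).
\end{equation*}
If one wants to see this from the first statement, use $e^\varepsilon\ge 1$ and $e^\varepsilon-1\le (e-1)\varepsilon$ for $\varepsilon\in(0,1)$. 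Also, for $\eta<1/4$ we have $4\eta(1-\eta)\le 4\eta$, so $\ln\frac{1}{4\eta(1-\eta)}\ge \ln\frac{1}{4\eta}$. This is a constant multiple of $\ln(1/\eta)$ once $\eta$ is bounded away from $1/4$, which the $\Omega$ absorbs.

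For the upper bound, invoke the second part of Proposition~\ref{prop:upper_1} with $\delta=0$ and $\varepsilon\in(0,1)$:
\begin{equation*}
n^*(O,\beta,\eta,\varepsilon)\le\left\lceil \frac{32 S^2\ln(1/\eta)}{\varepsilon^2\beta^2}\right\rceil .
\end{equation*}
The ceiling adds at most $1$. Under the standing assumptions, $\beta\le (\lambda_{\max}-\lambda_{\min})/4\le \|O\|_\infty/2\le S/2$ and $\varepsilon<1$, so $S^2/(\varepsilon^2\beta^2)\ge 4$. Moreover $\ln(1/\eta)>\ln 4$. The main term is therefore bounded below by a constant, and the ceiling is $O(S^2\ln(1/\eta)/(\varepsilon^2\beta^2))$. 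Combining this with the lower bound gives the two-sided estimate.

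For the case of fixed $O$, observe that $\lambda_{\max}(O)-\lambda_{\min}(O)>0$ and $S=\sum_P|\alpha_P|$ are then constants depending only on $O$. Both bounds are thus $\Theta(\ln(1/\eta)/(\varepsilon^2\beta^2))$, which yields the claimed equality. One could also note that $S\ge\|O\|_\infty\ge(\lambda_{\max}-\lambda_{\min})/2$, so the lower-bound constant never exceeds the upper-bound constant, though this is not needed.

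The only delicate point is the handling of the constants. The factor $\ln\frac{1}{4\eta(1-\eta)}$ versus $\ln(1/\eta)$ as $\eta\to 1/4$ is already absorbed into Proposition~\ref{prop:LB_1} as stated, so I would simply cite it. Beyond that, the proof is routine bookkeeping.
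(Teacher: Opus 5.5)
Your proposal is correct and follows the paper's proof: the lower bound is the second statement of Proposition~\ref{prop:LB_1}, the upper bound is the $\delta=0$, $\varepsilon\in(0,1)$ case of Proposition~\ref{prop:upper_1} (your handling of the ceiling via $\beta\le S/2$ is a correct detail the paper leaves implicit), and fixing $O$ gives $\Theta(\ln(1/\eta)/(\varepsilon^2\beta^2))$. One correction to your aside: no restriction keeping $\eta$ away from $1/4$ is needed, because $4\eta(1-\eta)\le 3/4$ on $(0,1/4)$ gives $\ln\frac{1}{4\eta(1-\eta)}\ge\max\{\ln(1/\eta)-\ln 4,\ \ln(4/3)\}$, which is uniformly comparable to $\ln(1/\eta)$; the degeneration you noticed comes only from the lossy step $4\eta(1-\eta)\le 4\eta$.
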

    The first statement of Proposition~\ref{prop:optimal_SC_O} follows from combining Proposition~\ref{prop:LB_1} and Proposition~\ref{prop:upper_1}. 
    The second statement follows immediately from the first statement when $O$ is fixed. Furthermore for $\varepsilon \leq \ln(2)$, from Proposition~\ref{prop:upper_2} together with Proposition~\ref{prop:optimal_SC_O}, we get 
\begin{equation}
    \Omega\!\left( \frac{(\lambda_{\max}(O) -\lambda_{\min}(O))^2 \ln(1/\eta)}{ \varepsilon^2 \beta^2} \right) \leq    n^*(O, \beta, \eta, \varepsilon)   \leq O\!\left( \frac{ \min \left\{ S^2, \Tr[O^2] \right\} \ln(1/\eta)}{ \varepsilon^2 \beta^2} \right).
\end{equation}

\medskip
Next, we look into specific observables to see how sample complexity scales in these settings:

\medskip
\noindent \textit{Pauli observables:}  
    When $O=P$ is itself a Pauli observable, we have $\lambda_{\max}(O)- \lambda_{\min}(O)=2$, and $S=1$. 
    In this case, the private sample complexity in the first statement in Proposition~\ref{prop:optimal_SC_O} is tight. In contrast, the classical shadows approach may not achieve optimality in this setting for large $d$, since $\Tr[O^2]=d=2^m$.

\medskip
\noindent \textit{Measurement Operators:}  Let $O$ be a measurement operator satisfying $0 \leq O \leq I$. Also, assume that $O$ could be implemented as a measurement in its circuit form (with the measurement channel represented by $\cM_O$). Then, using $n$ copies of the privatized state $\cA_q \!\left( \cM_O (\rho)\right) $ with $q=2(1-\delta)/(e^\varepsilon+1)$, we may use arguments similar to those in the proof of Proposition~\ref{prop:upper_1} to show that $n \geq \frac{2 \  (e^\varepsilon +1)^2}{\beta^2 (e^\varepsilon-1+2\delta)^2} \ln\!\left(\frac{2}{\eta}\right) $  is sufficient to achieve the required accuracy guarantees. With that, for $\varepsilon \leq 1$ and $\delta=0$, this said approach is optimal up to constants together with the lower bound in Proposition~\ref{prop:LB_1} whenever there exists a constant $C >0$ such that $\lambda_{\max}(O) -\lambda_{\min}(O) \geq C$.

\medskip
\noindent \textbf{Communication Cost}: Communication cost refers to the resources required by local parties to communicate each copy of $\cA(\rho)$ to the central party. 
In the first approach (Proposition~\ref{prop:upper_1}), we need to transmit a classical bit and the randomly chosen Pauli operator, requiring $2m +1$ classical bits of communication. However, for the second approach with classical shadows ( Proposition~\ref{prop:upper_2}), we need to send a matrix of size $d=2^m$, having $d^2$ complex entries. To this end, depending on the precision required, the communication cost can be much higher. On the other hand, the classical shadow approach may be more advantageous (i) when exact knowledge of the observable $O$ is not available (recall that the first approach assumes the complete description of $O$ in designing the privatization mechanism) or (ii) when we aim to estimate many properties using a set of observables. Furthermore, if one privatizes each copy of the state such that generic utility measures are optimized (\cref{thm:optimal-utility}), it is required to send the quantum state $\cA_{p}(\rho)$ to the central party. This way, we also incur a cost for \textit{quantum communication} in contrast to the previous approaches, showcasing an added benefit of selecting the right utility measure depending on the task.

\medskip
\noindent \textbf{Dependence on Privacy Parameters:} Proposition~\ref{prop:upper_2} shows that, whenever $e^\varepsilon +\delta (d+1) \geq 2$, the sample complexity of the classical shadow approach is independent of the privacy parameters. This may happen in the low privacy regime with $\varepsilon, \delta$ and $d$  being sufficiently large. 
To find out whether this approach achieves optimality for a fixed observable in the low privacy regime, we revisit the proof of Proposition~\ref{prop:LB_1}.
 There, we reduced the observable estimation task to a QHT problem with states $\rho_0, \rho_1$ in~\eqref{eq:rho_0} and~\eqref{eq:rho_1}. By noticing that any lower bound on the non-private QHT task is a lower bound on the private QHT of two states~\cite[Theorem~2]{NuradhaQueryComplexity_2025}, we obtain a lower bound of $ n^*(O, \beta, \eta, \varepsilon,\delta)  \geq \ln\! \left(4 \eta (1-\eta)\right)/\ln F(\rho_0, \rho_1)$ after substituting $p=q=1/2$. 
We also have $F(\rho_0, \rho_1)= 1- 4(\alpha')^2 $, where $\alpha'$ is defined in~\eqref{eq:alpha'}.  
If there exists a constant $C >0$ such that $-\ln  F(\rho_0, \rho_1) \leq C (\alpha')^2$, then we obtain a tight sample complexity bound in this regime with $  n^*(O, \beta, \eta, \varepsilon,\delta)  \geq \Omega \left({\ln\! \left( \frac{1}{4 \eta (1-\eta)}\right) (\lambda_{\max}(O) -\lambda_{\min}(O))^2 / \beta^2}
\right).$
For instance, when $\beta \leq (\lambda_{\max}(O) -\lambda_{\min}(O))/(4 \sqrt{2})$, we have $0<x \leq 1/2$ and $-\ln(1-x) \leq 2 \ln(2) \ x$ with $x=4 (\alpha')^2$. We leave a more detailed exploration of the optimality of private mechanisms in the expectation estimation task for the setting $\delta >0$ for future work.

\section{Conclusion and Future Work} \label{Sec:Conclude}

In this work, we studied privacy-utility tradeoffs that exist in quantum information processing when privacy is quantified by $(\varepsilon,\delta)$-quantum local differential privacy, and utility is quantified by generic measures and application-specific measures. First, we derived optimal utility with utility quantified by generic measures in Definition~\ref{def:fidelity_utility}, showcasing that a depolarizing channel achieves optimality and generalizing and improving previously known results.
Next, we study the application where we aim to estimate $\Tr[O\rho]$ for an observable $O$ and state $\rho$, given access to $\cA(\rho)$ with $\cA$ being a private mechanism. In this case, we define the utility based on the minimum number of samples of quantum states required to estimate $\Tr[O\rho]$ up to a probabilistic accuracy requirement. We obtained both lower and upper bounds on the minimum number of samples and provide parameter regimes and use cases where they match, leading to tight characterizations in the high privacy regime $(\varepsilon \leq 1)$ and low-privacy regime $(e^\varepsilon +\delta(d+1) \geq 2)$. We also studied a private version of classical shadows that can be utilized in estimating properties of quantum states privately.

This work showcases the importance of selecting the right utility metric in understanding privacy-utility tradeoffs. We focused on quantum local differential privacy (QLDP), which is suitable for applications where data is with several local parties, and offers a framework ensuring privacy under all possible adversaries. It would be interesting to study privacy-utility tradeoffs under flexible privacy frameworks, such as those used in~\cite{nuradha_QPP,measuredHS25}. 
An important technical tool towards this study would be sample complexity lower bounds on private quantum hypothesis testing when privacy is quantified by those flexible frameworks, in contrast to QLDP. Even in the QLDP setting, obtaining lower bounds that hold for $\delta >0$ would be useful in expanding the results obtained in this work on expectation estimation to a wide range of parameter regimes. To this effort, information contraction bounds under $(\varepsilon,\delta)$-QLDP developed recently in~\cite{nuradha2025nonLinearSDPI,dasgupta2025quantum} may be useful. 

In summary, this work serves both as a foundation and an invitation to study various quantum information processing tasks and estimating properties of quantum systems, while ensuring privacy for sensitive information such as the identity of the state.

\section*{Acknowledgments}
We thank Vishal Singh and Mark M.~Wilde for helpful discussions on privacy-utility tradeoffs in the quantum setting. TN acknowledges support from the IQUIST Postdoctoral Fellowship from the Illinois Quantum Information Science and Technology Center at the University of Illinois Urbana-Champaign.
SB and FL acknowledge support from National Science Foundation Grants No.~2426103 and 2442410.

\printbibliography

\appendix


\subsection{Proof of Proposition~\ref{prop:optimal_P_QLDP_Dep}} \label{App:optimal_p_QLDP}

    First, recall that $\cA$ being $(\varepsilon, \delta)$-QLDP is equivalent to~\eqref{eq:equivalence_pure}:
    \begin{equation} \label{eq:QLDP_Equiv}
    \sup_{\rho,\sigma \in \cD(\cH)}  E_{e^\varepsilon}\!\left( \cA(\rho) \Vert \cA(\sigma) \right)  = \sup_{\varphi_{1}\perp\varphi_{2}}E_{e^\varepsilon}(\mathcal{A}(\varphi_{1}
)\Vert\mathcal{A}(\varphi_{2})) \leq \delta,
\end{equation}
where $\varphi_{1}$ and $\varphi_{2}$ are orthogonal pure
states.

With that consider for $\gamma \geq 1$, 
\begin{align}
    & E_{\gamma}(\mathcal{A}_p(\varphi_{1}
)\Vert\mathcal{A}_p(\varphi_{2})) \notag \\
& =\Tr\!\left[ \left( \cA_p(\varphi_1) - \gamma \cA_p(\varphi_2) \right)_+\right] \\ 
&= \Tr\!\left[ \left((1-p) (\varphi_1 -\gamma \varphi_2) + \frac{pI}{d} (1-\gamma) \right)_+ \right] \\ 
&= \Tr\!\Bigg[ \Bigg( \varphi_1 \left(  1-p + \frac{p(1-\gamma)}{d} \right) + \varphi_2 \left(-\gamma( 1-p) + \frac{p(1-\gamma)}{d}  \right) \nonumber \\ 
& \quad \quad \quad  \quad \hspace{30mm} +(I- \varphi_1 -\varphi_2) \left(   \frac{p(1-\gamma)}{d} \right) \Bigg)_+ \Bigg] \\
&=\left(  1-p + \frac{p(1-\gamma)}{d} \right)_+,
\end{align}
where the last equality follows since $\varphi_1, \varphi_2, I- \varphi_1-\varphi_2$ are orthogonal projections and due to $\gamma \geq 1$, the positive part is contributed only by the projection $\varphi_1$.

Using the above equality, we have 
\begin{equation}
    \sup_{\varphi_{1}\perp\varphi_{2}}E_{\gamma}(\mathcal{A}_p(\varphi_{1}
)\Vert\mathcal{A}_p(\varphi_{2})) = \left(  1-p + \frac{p(1-\gamma)}{d} \right)_+ =\left(  1 -p \left(\frac{d-1+\gamma}{d}\right) \right)_+
\end{equation}
Thus, when $\gamma \geq 1$, we see that 
\begin{equation}
    p \geq \frac{d(1-\delta)}{\gamma +d-1} \implies  \sup_{\varphi_{1}\perp\varphi_{2}}E_{\gamma}(\mathcal{A}_p(\varphi_{1}
)\Vert\mathcal{A}_p(\varphi_{2})) \leq \delta,
\end{equation}
and 
\begin{equation}
    p < \frac{d(1-\delta)}{\gamma +d-1} \implies  \sup_{\varphi_{1}\perp\varphi_{2}}E_{\gamma}(\mathcal{A}_p(\varphi_{1}
)\Vert\mathcal{A}_p(\varphi_{2})) > \delta.
\end{equation}
Finally, we conclude the proposition statement by recalling the equivalence to $(\varepsilon,\delta)$-QLDP as in~\eqref{eq:QLDP_Equiv} by choosing $\gamma=e^\varepsilon \geq 1$.

\subsection{Proof of Proposition~\ref{prop:N_u_N_G_private}} \label{App:prop_N_U_N_G_private}

 Consider that 
    \begin{align}
        E_\gamma\!\left( \cN_U(\rho) \Vert \cN_U(\sigma)\right) & = E_\gamma\!\left(  U^\dag \cN( U \rho U^\dag) U \Vert U^\dag \cN( U \sigma U^\dag) U\right) \\
        &=  E_\gamma\!\left(  \cN( U \rho U^\dag) \Vert \cN( U \sigma U^\dag) \right), \label{eq:unitary-rotated-input}
    \end{align}
    where the last line follows by the unitary invariance of $E_\gamma$, itself a consequence of data processing.
    It follows from \eqref{eq:unitary-rotated-input} that $E_\gamma(\cN(\cdot)\|\cN(\cdot))$ and $E_\gamma(\cN(U\cdot U^\dagger)\|\cN(U\cdot U^\dagger))$ have the same supremum over all pairs of states $(\rho,\sigma)$, and hence $\cN$ is $(\varepsilon,\delta)$-QLDP if and only if $\cN_U$ is.

For the second assertion, recall that $\cN_G = \int_{G} \cV_{g^{-1}}\circ\cN\circ\cU_g \,\mathrm d\mu(g)$, where $\cV_g=V_g\cdot V_g^\dagger$ and $\cU_g=U_g\cdot U_g^\dagger$ for $g\in G$ are the unitary channels associated with the chosen unitary representations.
Then, using the joint convexity of $E_\gamma$ (see \cite{hirche2023quantum}), we have for all $\rho,\sigma$ that
\begin{align}
    E_\gamma(\cN_G(\rho)\|\cN_G(\sigma)) &\leq \int_{G} E_\gamma(V_g^\dagger \cN(U_g\rho U_g^\dagger) V_g \| V_g^\dagger \cN(U_g\sigma U_g^\dagger) V_g) \,\mathrm d\mu(g)\\
    &= \int_{G} E_\gamma(\cN_{U_g}(\rho) \| \cN_{U_g}(\sigma)) \,\mathrm d\mu(g). \label{eq:twirled-channel-hockey-stick}
\end{align}
Taking suprema over pairs $(\rho,\sigma)$ on both sides of \eqref{eq:twirled-channel-hockey-stick} and using an argument similar to that in the first part of the proposition, we see that $\cN_G$ is $(\varepsilon, \delta)$-QLDP if $\cN$ is.
\subsection{Proof of Lemma~\ref{lem:twirl-fidelity-qdps}} \label{app:Lemma_twirl_trace}

For fidelity:     Note first that, for all unitaries $U$, 
    \begin{align}
        F(\cN_U) = \min_{\rho} F(U^\dagger\cN(U\rho U^\dagger) U, \rho) = \min_\rho F(\cN(U\rho U^\dagger), U\rho U^\dagger) = F(\cN),\label{eq:fidelity-invariance}
    \end{align}
    where the second equality follows from the unitary invariance of fidelity.
    Using concavity of the (squared) fidelity in one argument, we then have for any $\rho$ that
    \begin{align}
    F\left( \int_G\mathrm{d}\mu(g)\, U_g^\dagger \cN(U_g \rho U_g^\dagger) U_g, \rho\right) &\geq \int_G\mathrm{d}\mu(g)\,F\left(U_g^\dagger \cN(U_g \rho U_g^\dagger) U_g, \rho \right) \\  &\geq  \int_G\mathrm{d}\mu(g)\, F(\cN_{U_g}) = F(\cN),
    \end{align}
    where we used \eqref{eq:fidelity-invariance} in the equality.
    In particular, this holds for the state $\rho$ achieving the minimum in $F(\cN_G)$, which concludes the proof.

For trace distance: First, we note that $T(\cN) = T(\cN_U)$ for any unitary $U$ by unitary invariance of the trace norm.
    Second, we have
    \begin{align}
        \left\| \int_G \mathrm{d}\mu(g)\, \cN_{U_g}(\rho) - \rho\right\|_1 &= \left\| \int_G \mathrm{d}\mu(g)\, U_g^\dagger \left(\cN(U_g\rho U_g^\dagger) - U_g\rho U_g^\dagger\right) U_g \right\|_1 \\
        &\leq \int_G \mathrm{d}\mu(g)\, \left\|U_g^\dagger \left(\cN(U_g\rho U_g^\dagger) - U_g\rho U_g^\dagger\right) U_g \right\|_1\\
        &= \int_G \mathrm{d}\mu(g)\, \left\|\cN_{U_g}(\rho)-\rho \right\|_1\\
        &\leq 2\int_G \mathrm{d}\mu(g)\, T(\cN_{U_g})\\
        &= 2 T(\cN),
    \end{align}
    where we used convexity of the trace norm in the first inequality.

\subsection{Proof of~\cref{thm:optimal-utility}} \label{app:proof_optimal_utility_generic_d}

  We first observe that, for a depolarizing channel $\cA_p$, we have
    \begin{align}
        T(\cA_{p}) &= \frac{p(d-1)}{d} & F(\cA_p) &= 1-\frac{p(d-1)}{d}.
        \label{eq:utility-dep-channel-formulas}
    \end{align}
    To see this, note that the trace norm is convex, and hence the optimum in $T(\cN)=\max_\rho T(\cN(\rho),\rho)$ is achieved on a pure state.
    Furthermore, for any pure state $\psi$,
    \begin{align}
     T\left( \cA_p( \psi ), \psi\right) &= \frac{1}{2}\left\| (1-p)\psi + p \frac{1}{d}I-\psi \right\|_1 \\ &= \frac{p}{2} \left\| \frac{1}{d} I - \psi\right\|_1 \\ &= \frac{p}{2}\left( 1-\frac{1}{d} + \frac{d-1}{d} \right) = \frac{p(d-1)}{d}.
    \end{align}
    The formula for $F(\cA_p)$ can be shown similarly together with the concavity of the fidelity and 
    \begin{align}
    F\left( \cA_p( \psi ), \psi\right) &= \langle \psi | \cA_p( \psi ) |\psi \rangle \\
    &= \left \langle \psi \middle | \left[(1-p) |\psi \rangle\!\langle \psi| + \frac{p I}{d} \right]   \middle |\psi \right \rangle \\ 
    &= (1-p) | \langle \psi | \psi \rangle |^2 + \frac{p}{d} \langle \psi | \psi \rangle  \\
    &=(1-p)+\frac{p}{d}.
\end{align} 
Also note that $F(\cA_p) =1-T(\cA_p)$.

    We now prove $\max_{\cN \in \cB^{\varepsilon, \delta}_d} F(\cN) =   \frac{e^\varepsilon +\delta (d-1)}{e^\varepsilon +d-1} $
    in the following; $\min_{\cN \in \cB^{\varepsilon, \delta}_d} T(\cN) =  \frac{(d-1)(1-\delta)}{e^\varepsilon +d-1}$ can be proved using analogous arguments.
    By Proposition~\ref{prop:optimal_P_QLDP_Dep} the set $\mathcal{B}^{\varepsilon,\delta}_d$ includes all depolarizing channels $\cA_p$ with
    \begin{align}
        p\geq \frac{d(1-\delta)}{e^\varepsilon + d-1}\eqqcolon p^*,\label{eq:p-condition}
    \end{align}
    and we have $F(\cA_p) = 1-p(1-1/d)$ by \eqref{eq:utility-dep-channel-formulas}, which is maximized by choosing $p=p^*$.
    Hence, 
    \begin{align} \max_{\cN\in\mathcal{B}^{\varepsilon,\delta}_d} F(\cN) \geq 1-p^*(1-1/d) = \frac{e^\varepsilon + \delta(d-1)}{e^\varepsilon + d-1}. \label{eq:F-LB}
    \end{align}
    
    Let now $\cM$ be a channel maximizing $\max_{\cN\in\mathcal{B}^{\varepsilon,\delta}_d} F(\cN)$. The twirled channel $\cM_{\mathbb{U}(d)}$ is a depolarizing channel $\cA_q$ with $q = q(\cM)$ also satisfying \eqref{eq:p-condition} by Proposition~\ref{prop:N_u_N_G_private}.
    Using Lemma~\ref{lem:twirl-fidelity-qdps}, we then have
    \begin{align}
        \max_{\cN\in\mathcal{B}^{\varepsilon,\delta}_d} F(\cN) \equiv F(\cM) \leq F\left(\cM_{\mathbb{U}(d)}\right) = 1-q(1-1/d) \leq 1-p^*(1-1/d), \label{eq:F-UB}
    \end{align}
    where the last inequality follows from the fact that $p\mapsto 1-p(1-1/d)$ is non-increasing and $q\geq p^*$.
    \Cref{eq:F-LB,eq:F-UB} together show that $  \max_{\cN\in\mathcal{B}^{\varepsilon,\delta}_d} F(\cN) = \frac{e^\varepsilon + \delta(d-1)}{e^\varepsilon + d-1},$
    which concludes the proof.

    \subsection{Private Quantum Hypothesis Testing} \label{App:Private_QHT}

    For $\delta=0$, let $\cA$ be an $\varepsilon$-QLDP mechanism. Then
\begin{align} \label{eq:def_SC}
    \mathrm{SC}^{\cA} _{(\rho,\sigma)}(\alpha,p,q) \coloneqq  
    & \min \!\left\{ n \in \mathbb{N}: p_{e}\!\left((\cA(\rho))^{\otimes n},(\cA(\sigma))^{\otimes n},p,q \right) \leq \alpha \right\},
\end{align}
where \begin{equation}
p_{e}\!\left(\rho^{\otimes n},\sigma^{\otimes n},p,q \right) \coloneqq \frac{1}{2}\left(  1-\left\Vert p\rho^{\otimes n}-q\sigma^{\otimes
n}\right\Vert _{1}\right).\label{eq:eps-n-relation}%
\end{equation}
The sample complexity under $\varepsilon$-QLDP is defined as follows: 
\begin{equation}\label{eq:SC_private_int}
    \mathrm{SC}^{\varepsilon} _{(\rho,\sigma)} (\alpha,p,q)\coloneqq \inf_{\cA \in \cB^\varepsilon} \mathrm{SC}^{\cA} _{(\rho,\sigma)}(\alpha,p,q), 
\end{equation}
where $\cB^\varepsilon$ is the set of all $\varepsilon$-QLDP mechanisms defined in \eqref{eq:private-channels} with $\delta=0$. 

With that, we have the following result, which is a combination of private QHT bounds from~\cite{nuradha2024contraction,Christoph2024sample} (see Theorem~8 of ~\cite{gallage2025theory}), together with the use of an improved lower bound on the sample complexity of QHT obtained in~\cite[Theorem~2]{NuradhaQueryComplexity_2025}, instead of the first lower bound in~\cite[Theorem~II.7]{cheng2024invitation}. 
\begin{theorem}[Bounds on Private Sample Complexity
]\label{thm:bounds_sample_C_private}
    Let $p\in(0,1)$, set $q\coloneqq 1-p$, and let $\rho$ and $\sigma$ be states. Fix the error probability $\alpha \in (0,pq)$. For $\varepsilon>0$, the following holds:
    \begin{align}
      \max \left\{\frac{C_{\varepsilon,p,q,\alpha}}{T(\rho,\sigma)} , \frac{\ln\!\left(\frac{pq}{\alpha (1-\alpha)} \right) e^\varepsilon }{ 2(e^\varepsilon -1)^2  \left[T(\rho,\sigma)\right]^2}  \right \} 
 \leq \mathrm{SC}^{\varepsilon} _{(\rho,\sigma)} (\alpha,p,q) 
  \leq  \left \lceil 2 \ln\!\left( \frac{\sqrt{pq}}{\alpha} \right) \!\left( \frac{(e^\varepsilon +1)}{(e^\varepsilon -1) T(\rho, \sigma)} \right)^2  \right\rceil,
    \label{eq:private-sample-compl-bnds}
    \end{align}
    where $C_{\varepsilon,p,q,\alpha} \coloneqq   \max\!\left\{ \frac{\ln\!\left(\frac{pq}{\alpha (1-\alpha)} \right) (e^\varepsilon +1)}{\varepsilon (e^\varepsilon -1)  } ,\frac{\left({1-\frac{\alpha(1-\alpha)}{pq}} \right) (e^\varepsilon +1)}{2 \ (e^{\varepsilon/2}-1)^2  } \right\}.$
\end{theorem}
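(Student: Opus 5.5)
The plan is to treat the statement as an assembly of three ingredients. The first is a non-private sample-complexity lower bound applied to the privatized pair $(\cA(\rho),\cA(\sigma))$. The second is a contraction inequality valid for every $\cA\in\cB^\varepsilon$, which bounds a divergence of the privatized pair in terms of $T(\rho,\sigma)$. The third is an explicit private mechanism for the upper bound. Since $\mathrm{SC}^{\varepsilon}_{(\rho,\sigma)}$ is an infimum over $\cA\in\cB^\varepsilon$, it suffices to prove each lower bound for a fixed but arbitrary $\varepsilon$-QLDP $\cA$, with constants independent of $\cA$. The upper bound then only needs a single good $\cA$.

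\textbf{Lower bounds.} For the second term in the maximum, I would start from \cite[Theorem~2]{NuradhaQueryComplexity_2025}. For any pair of states $\tau_0,\tau_1$ it gives
\begin{equation*}
\mathrm{SC}_{(\tau_0,\tau_1)}(\alpha,p,q)\geq \frac{\ln\!\left(\frac{pq}{\alpha(1-\alpha)}\right)}{-\ln F(\tau_0,\tau_1)},
\end{equation*}
and I would apply it with $\tau_0=\cA(\rho)$, $\tau_1=\cA(\sigma)$. Next I would upper bound $-\ln F$ by a sandwiched or Petz Rényi divergence of order $1/2$, and then by a quantum $\chi^2$-type divergence. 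For this divergence I would invoke the $\varepsilon$-QLDP contraction result of \cite{nuradha2024contraction,Christoph2024sample}, which reads $\chi^2(\cA(\rho)\Vert\cA(\sigma))\le \frac{2(e^\varepsilon-1)^2}{e^\varepsilon}T(\rho,\sigma)^2$ up to the stated constant. Substituting this gives the term $\frac{\ln(pq/(\alpha(1-\alpha)))e^\varepsilon}{2(e^\varepsilon-1)^2T^2}$.

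For the term $C_{\varepsilon,p,q,\alpha}/T(\rho,\sigma)$, I would use the trace-distance contraction for $\varepsilon$-QLDP channels,
\begin{equation*}
T(\cA(\rho),\cA(\sigma))\le \frac{e^\varepsilon-1}{e^\varepsilon+1}\,T(\rho,\sigma),
\end{equation*}
together with subadditivity-type bounds on the $n$-fold error probability. The two expressions inside $C_{\varepsilon,p,q,\alpha}$ would come from two such bounds. One combines a relative-entropy bound $D\le \varepsilon\,T$ with the $\ln$-type error bound. The other uses the fidelity contraction $1-\sqrt{F}\lesssim (e^{\varepsilon/2}-1)^2/(e^\varepsilon+1)\cdot T$, combined with $p_e\ge \frac{pq}{\ldots}(\ldots)$ via $1-F^n\le n(1-F)$. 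Both are exactly the bounds collected in \cite[Theorem~8]{gallage2025theory}, so I would cite them rather than rederive them.

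\textbf{Upper bound.} I would choose $\cA$ to be the Helstrom measurement $\{\Pi,I-\Pi\}$ for $(\rho,\sigma)$ followed by binary randomized response, i.e., a classical depolarizing channel with $q'=2/(e^\varepsilon+1)$. This composite map is $\varepsilon$-QLDP by Proposition~\ref{prop:optimal_P_QLDP_Dep} with $d=2$, $\delta=0$, together with data processing. It outputs Bernoulli distributions whose total-variation distance is $t=\frac{e^\varepsilon-1}{e^\varepsilon+1}T(\rho,\sigma)$. For $n$ i.i.d.\ samples I would bound the error by the Bhattacharyya bound $p_e\le\sqrt{pq}\,B^n$. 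I would then control the Bhattacharyya coefficient via $B\le\sqrt{1-t^2}\le e^{-t^2/2}$. Requiring $\sqrt{pq}\,e^{-nt^2/2}\le\alpha$ gives exactly $n\ge 2\ln(\sqrt{pq}/\alpha)\,t^{-2}$.

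\textbf{Main obstacle.} I expect the hardest part to be pinning down the precise divergence chain $-\ln F\le D_{1/2}\le\ldots$ together with the private contraction coefficient, so that the constant comes out as exactly $e^\varepsilon/(2(e^\varepsilon-1)^2)$. My first attempt would be to use the $\chi^2$ contraction bound under $\varepsilon$-QLDP from \cite{nuradha2024contraction} together with $-\ln F\le\ln(1+\chi^2)\le\chi^2$.
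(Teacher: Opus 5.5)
Your proposal takes essentially the paper's route. The paper likewise applies the improved non-private bound $\ln\!\big(\tfrac{pq}{\alpha(1-\alpha)}\big)/(-\ln F)$ of \cite[Theorem~2]{NuradhaQueryComplexity_2025}, together with the Bures-distance lower bound of \cite[Theorem~II.7]{cheng2024invitation}, to the privatized pair $(\cA(\rho),\cA(\sigma))$, inserts the $\varepsilon$-QLDP contraction bounds of \cite{nuradha2024contraction,Christoph2024sample}, and takes the upper bound from the measure-then-randomized-response construction behind \cite[Theorem~8]{gallage2025theory}. One fix: the first entry of $C_{\varepsilon,p,q,\alpha}$ cannot be cited verbatim from \cite[Theorem~8]{gallage2025theory}, which rests on the older $\ln(pq/\alpha)$ bound, so rederive it as you sketch, via $-\ln F(\cA(\rho),\cA(\sigma))\le D(\cA(\rho)\Vert\cA(\sigma))\le \varepsilon\, T(\cA(\rho),\cA(\sigma))\le \varepsilon\frac{e^\varepsilon-1}{e^\varepsilon+1}T(\rho,\sigma)$ combined with the improved bound.
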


In fact, this also concludes that for fixed $\alpha,p,q$ with $\alpha \leq pq$ and for $\varepsilon \in (0,1]$, we have that \cite[Theorem~9]{gallage2025theory}, 
\begin{equation}
    \mathrm{SC}^{\varepsilon} _{(\rho,\sigma)} (\alpha,p,q) = \Theta\!\left(\frac{1}{{\varepsilon^2 \left[T(\rho,\sigma) \right]^2}}\right).
\end{equation}

Note that the lower bounds in~\eqref{eq:private-sample-compl-bnds} are providing limits on any hypothesis testing protocol that gets access to privatized states instead of noiseless states. With the use of the above result, we obtain a lower bound on the sample complexity of the observable estimation task as given in the proof of Proposition~\ref{prop:LB_1}.

\subsection{Proof of Proposition~\ref{prop:LB_1}} \label{app:estimation_LB_proof}

 Let $|\psi_{\max} \rangle $ and $|\psi_{\min} \rangle $ be (normalized) eigenvectors corresponding to the maximum and minimum eigenvalues $\lambda_{\max}(O)$ and $\lambda_{\min}(O)$ of the observable $O$, respectively. Define the shorthand 
    \begin{equation} \label{eq:alpha'}
        \alpha' \coloneqq \frac{2\beta}{(\lambda_{\max}(O) -\lambda_{\min}(O))},
    \end{equation}
   and note that $\alpha' \in (0,1/2)$ with the condition assumed on $\beta$. With that, let us consider the following two states:
   \begin{align}
       \rho_0 & \coloneqq \left( \frac{1}{2} + \alpha '\right) |\psi_{\max} \rangle \! \langle \psi_{\max}| +   \left( \frac{1}{2} - \alpha '\right) |\psi_{\min} \rangle \! \langle \psi_{\min}|  \label{eq:rho_0}\\
       \rho_1 & \coloneqq \left( \frac{1}{2} -\alpha '\right) |\psi_{\max} \rangle \! \langle \psi_{\max}| +   \left( \frac{1}{2} + \alpha '\right) |\psi_{\min} \rangle \! \langle \psi_{\min}| \label{eq:rho_1}. 
   \end{align}
   We also have that 
   $T(\rho_0, \rho_1) =2 \alpha'$ since 
   \begin{align}
       \| \rho_0- \rho_1\|_1 &=2 \alpha' \| |\psi_{\max} \rangle \! \langle \psi_{\max}|-|\psi_{\min} \rangle \! \langle \psi_{\min}| \|_1 =4\alpha',
   \end{align}
   where the last equality holds since these (normalized) eigenvectors are orthogonal by the assumption $\lambda_{\max}(O) \neq \lambda_{\min}(O)$.
   Furthermore, we have that
   \begin{align}
       \Tr[O \rho_0] & = \left( \frac{1}{2} + \alpha '\right) \lambda_{\max}(O) + \left( \frac{1}{2} - \alpha '\right) \lambda_{\min}(O) \\ 
      \Tr[O \rho_1] & = \left( \frac{1}{2} - \alpha '\right) \lambda_{\max}(O) + \left( \frac{1}{2} + \alpha '\right) \lambda_{\min}(O) ,  
   \end{align}
   leading to 
\begin{align}
    \Tr[O\rho_0] -\Tr[O \rho_1] &= 2 \alpha' (\lambda_{\max}(O)- \lambda_{\min}(O)) 
    =4 \beta,
\end{align}
after substituting for $\alpha'$.
Also, 
\begin{align}
     \Tr[O\rho_0] +\Tr[O \rho_1] &=\lambda_{\max}(O)+ \lambda_{\min}(O), \\
     \lambda_{\max}(O)+ \lambda_{\min}(O) &= 2  \Tr[O\rho_0] -4 \beta =2  \Tr[O\rho_1] +4 \beta. \label{eq:rel_trO_rho_lambda}
\end{align}

Let us consider the following hypothesis testing setting: Two states $\rho_0$ and $\rho_1$ with equal prior probability of occurrence and we define 
\begin{equation}
    H_0 : \rho = \rho_0  \quad H_1: \rho= \rho_1.
\end{equation}
Assume that we have a protocol that, having access to $n$ copies of $\cA(\rho)$, can estimate $\Tr[O\rho]$ up to an accuracy $\beta$ with probability at least $1-\eta$.
Given $n$ copies of $\cA(\rho)$  where $\rho \in \{\rho_0,\rho_1\}$, run the estimator and threshold it as follows:
Let $\hat{E}_\rho$ be the estimate obtained and define the following testing criterion:
\begin{equation}
T(\hat{E}_\rho ) = \begin{cases}
    0   & \hat{E}_\rho  \geq \frac{1}{2} \left(\lambda_{\max}(O)+ \lambda_{\min}(O)\right)\\
    1  &  \textnormal{otherwise.}
\end{cases}
\end{equation}
Now, we study the error probability of this protocol. Under $H_0$, we have $\rho=\rho_0$. With that, an error happens iff we declare the state given to us is $\rho_1$ in the testing phase, i.e.; we get the test outcome to be one. This leads to the following: 
\begin{align}
    \operatorname{Pr}[\operatorname{error} | H_0] &=  \operatorname{Pr}[ T(\hat{E}_{\rho_0}  )=1] \quad 
    \\
    &=  \operatorname{Pr}\!\left[ \frac{1}{2} \left(\lambda_{\max}(O)+ \lambda_{\min}(O)\right) > \hat{E}_{\rho_0}  \right] \\
    &= \operatorname{Pr}\!\left[ 
    \Tr[O\rho_0]- \hat{E}_{\rho_0} > 2 \beta \right] \\ 
     &  \leq \operatorname{Pr}\!\left[ 
    \Tr[O\rho_0]- \hat{E}_{\rho_0}   >  \beta \right] \\
   & \leq \eta,
\end{align}
where the third equality follows from~\eqref{eq:rel_trO_rho_lambda}; first inequality since $\Tr[O\rho_0]- \hat{E}_{\rho_0} > 2 \beta  \implies     \Tr[O\rho_0]- \hat{E}_{\rho_0}  >  \beta  $, and the last inequality since the estimation protocol has the following accuracy criterion: $\operatorname{Pr}\!\left[ 
    \left | \Tr[O\rho_0]- \hat{E}_{\rho_0}  \right | \leq  \beta \right] \geq 1-\eta $ leading to $\operatorname{Pr}\!\left[ 
      \Tr[O\rho_0]- \hat{E}_{\rho_0}    \leq  \beta \right] \geq 1-\eta $.

Similarly, under hypothesis $H_1$, we have 
\begin{align}
    \operatorname{Pr}[\operatorname{error} | H_1] &=  \operatorname{Pr}[ T(\hat{E}_{\rho_1}  )=0] \\
    &=  \operatorname{Pr}\!\left[ \frac{1}{2} \left(\lambda_{\max}(O)+ \lambda_{\min}(O)\right) \leq  \hat{E}_{\rho_1} \right] \\
    &= \operatorname{Pr}\!\left[ 
  \hat{E}_{\rho_1}   -\Tr[O\rho_1] > 2 \beta \right] \\ 
     &  \leq \operatorname{Pr}\!\left[ 
     \hat{E}_{\rho_1}   -\Tr[O\rho_1] >  \beta \right] \\
   & \leq \eta,
\end{align}
where the last inequality follows since $\operatorname{Pr}\!\left[ 
    \left | \Tr[O\rho_1]- \hat{E}_{\rho_1} \right | \leq  \beta \right] \geq 1-\eta $ leading to \\$\operatorname{Pr}\!\left[ 
  \hat{E}_{\rho_1}    -  \Tr[O\rho_1]  \leq  \beta \right] \geq 1-\eta $.
With these, we have that the average error of the hypothesis testing protocol is at most $\eta$ since 
\begin{equation}
     \operatorname{Pr}[\operatorname{error}] =\frac{1}{2} \operatorname{Pr}[\operatorname{error} | H_0] + \frac{1}{2} \operatorname{Pr}[\operatorname{error} | H_1] \leq \eta.
\end{equation}
However, to achieve this error tolerance, by the lower bounds for private hypothesis testing we need $n$ to satisfy the following lower bound due to~\cref{thm:bounds_sample_C_private}: for $\eta \in (0,1/4)$ since $p=q=1/2$ 
\begin{equation}
    n \geq \frac{\ln\! \left( \frac{1}{4 \eta (1-\eta)}\right) e^\varepsilon }{ 2(e^\varepsilon -1)^2  \left[T(\rho_0,\rho_1)\right]^2}
\end{equation}

Since $T(\rho_0,\rho_1) =2\alpha'$, we arrive at the following:
\begin{equation}
    n \geq \frac{\ln\! \left( \frac{1}{4 \eta (1-\eta)}\right) e^\varepsilon  (\lambda_{\max}(O) -\lambda_{\min}(O))^2}{ 32(e^\varepsilon -1)^2  \beta^2 }.
\end{equation}
Since the above holds for all private mechanisms $\cA \in \cB^\varepsilon$ and protocol $\cP_O$, we arrive at the desired lower bound in the first inequality of the proposition. 

Now for the regime $\varepsilon \in (0,1)$, we have that \footnote{To see why the~\eqref{eq:eps_le_1} holds, consider $g(\varepsilon) \coloneqq 2 \varepsilon^2 e^\varepsilon -(e^\varepsilon -1)^2$ and $g(\varepsilon) >0$ for $\varepsilon \in (0,1)$. This is because the first derivative $g'(\varepsilon) =2e^\varepsilon ( (\varepsilon +1)^2 -e^\varepsilon) \geq 0$ leading to $g(\varepsilon)$ being a non-decreasing function of $\varepsilon$ with $e^\varepsilon \leq 1+ 2 \varepsilon + \varepsilon^2$, so that $g(\varepsilon) > g(0)=0$ for $\varepsilon \in (0,1)$}
\begin{equation} \label{eq:eps_le_1}
    \frac{e^\varepsilon }{(e^\varepsilon -1)^2} > \frac{1}{2 \varepsilon^2}, 
\end{equation}
and together with $\eta \in (0,1/4)$, we conclude the second desired statement by substituting into the first inequality of the Proposition.

\subsection{Proof of Lemma~\ref{lem:A_O^q_private}}
\label{app:lemma_A_O^q_private_proof}
    Let $\rho, \sigma \in \cD(\cH)$, and $M_{AB}$ such that $0 \leq M_{AB} \leq I_{AB}$, with $A$ and $B$ denoting the output space of $ \cA_q \!\left( \cM_P (\cdot)\right) $ and the classical register having $P$, respectively. 
    Then, consider, 
    \begin{align}
        \Tr\!\left[ M_{AB} \  \cA_q^O(\rho) \right] &=  \Tr\!\left[ M_{AB}   \sum_{P} p(P) \ \cA_q \!\left( \cM_P (\rho)\right) \otimes |P \rangle\!\langle P| \right] \\
        &= \sum_{P} p(P) \Tr\!\left[ M_{AB} \cA_q \!\left( \cM_P (\rho)\right) \otimes |P \rangle\!\langle P| \right] \\
        &= \sum_{P} p(P) \Tr\!\left[ \langle P| M_{AB} | P\rangle \  \cA_q \!\left( \cM_P (\rho)\right)  \right] \\ 
        &=\sum_{P} p(P) \Tr\!\left[  M'_P  \cA_q \!\left( \cM_P (\rho)\right)  \right] \\
        & \leq \sum_{P} p(P) \left( e^{\varepsilon} \Tr\!\left[  M_P'  \cA_q \!\left( \cM_P (\sigma)\right)\right] +\delta  \right)  \\
        &= e^\varepsilon \Tr\!\left[ M_{AB} \  \cA_q^O(\sigma) \right] + \delta,
    \end{align}
where $M_P' \coloneqq \langle P| M_{AB} | P\rangle $ in the fourth equality; first inequality follows since $0 \leq M'_P \leq I$ and $\cA_q\circ \cM_P$ satisfies ($\varepsilon,\delta$)-QLDP with $q= 2(1-\delta)/ (e^\varepsilon +1)$ by substituting $d=2$ in~\cref{prop:optimal_P_QLDP_Dep} (see also~\cite[Proposition~1]{nuradha2024contraction} for $\delta=0$), and the last equality by rearranging the terms and redoing the equalities in the reverse order to get the desired expression. 
Since the above inequality holds for all pairs of states $\rho, \sigma$ and measurement operators $M_{AB}$, we conclude that $\cA_q^O \in \cB^{\varepsilon,\delta}$.

\subsection{Proof of Proposition~\ref{prop:upper_1}}
\label{app:prop_upper_bound_estimate_1}

Assume that we are given $n$ i.i.d.\ classical outputs of $\mathcal A_q^O(\rho)$.
For each sample $i\in\{1,\dots,n\}$, let the observed pair be $(Y_i, P_i)$, where $Y_i\in\{0,1\}$ is the computational-basis bit obtained after applying the channels $\cA_q \circ \cM_{P_i}$ and $P_i\sim p(\cdot)$ is the Pauli label. Using the pair $(Y_i, P_i)$, we define the random variables $X_i \coloneqq  (-1)^{Y_i}\in\{+1,-1\} $ and
\begin{equation} \label{eq:def_Z_i}
    Z_i \coloneqq \frac{S}{1-q}\,\operatorname{sgn}(\alpha_{P_i}) X_i.
\end{equation}

Let the estimator be $ \widehat{E} \coloneqq \frac{1}{n}\sum_{i=1}^n Z_i$
and note that $\widehat{E}$ is a random variable with $\mathbb{E} [\widehat{E}]= \mathbb{E}[Z_i]$.
Furthermore, $ \mathbb{E}[Z_i |P_i=P]= \frac{S}{1-q}\,\operatorname{sgn}(\alpha_{P}) \mathbb{E}[X_i |P_i=P].$
We also have that $\mathbb{E}[X_i |P_i=P]  
    = \operatorname{Pr}(Y_i=0|P_i=P) - \operatorname{Pr}(Y_i=1|P_i=P).$
By considering the action of the channel $\cA_q \circ \cM_P$ and recalling that for a qubit state $\omega$ $\cA_q(\omega)= (1-q) \omega  + q I/2$, we have that 
\begin{align}
   \operatorname{Pr}(Y_i=0|P_i=P) & = \Tr\!\left[ \left(\frac{I+P}{2}\right) \rho \right] \left(1-q + \frac{q}{2} \right) + \frac{q}{2} \Tr\!\left[ \left(\frac{I-P}{2}\right) \rho \right] \\
   &= \frac{1}{2} +\frac{(1-q)}{2} \Tr[P \rho] 
\end{align} and $ 
   \operatorname{Pr}(Y_i=1|P_i=P)  = \frac{1}{2} -\frac{(1-q)}{2} \Tr[P \rho]$,
leading to $ \mathbb{E}[X_i |P_i=P]=  (1-q)\Tr[P \rho]. $
With those ingredients, consider 
\begin{align}
    \mathbb{E} [\widehat{E}] &= \mathbb{E}[Z_i] \\
    &= \sum_P p(P) \  \mathbb{E}[Z_i |P_i=P] \\
    &= \sum_P p(P) \frac{S}{1-q}\,\operatorname{sgn}(\alpha_{P}) \  \mathbb{E}[X_i |P_i=P] \\ 
    &= \sum_P p(P) \frac{S}{1-q}\,\operatorname{sgn}(\alpha_{P}) \  (1-q)\Tr[P \rho] \\
    &= \sum_P \frac{|\alpha_P|}{S}  \times  S \operatorname{sgn}(\alpha_{P}) \ \Tr[P \rho] \\
    &= \sum_P \alpha_P \Tr[P \rho] \\
    &= \Tr[ O \rho], 
\end{align}
by recalling that $O= \sum_P \alpha_P P$.

By~\eqref{eq:def_Z_i}, we have that $Z_i \in (a, b)$
with $a = -S /(1-q)$ and $b=S /(1-q)$. Then by using $\mathbb E[\widehat{E}]=\Tr[O\rho]$ and Hoeffding's inequality yields
\begin{equation} \label{eq:Accuracy_gurantee_Hoeff}
    \Pr\!\left[\left|\widehat{E}-\Tr[O\rho] \right|\geq \beta \right]
\leq
2\exp\!\left(-\frac{n\beta^2(1-q)^2}{2S^2}\right),
\end{equation}
and it suffices to choose $n$ such that the right hand side of the above inequality is at most $\eta$, i.e.
\begin{equation}
n \geq \frac{2 S^2}{\beta^2(1-q)^2} \ln\!\left(\frac{2}{\eta}\right).
\end{equation}
Since $q=2(1-\delta)/(e^\varepsilon +1)$, we arrive at the desired result together with~\eqref{eq:Accuracy_gurantee_Hoeff}.

 For $\varepsilon \in (0,1)$ and $\delta=0$, 
   \begin{equation} \label{eq:epa_one_less_upper}
     \frac{1}{1-q}   = \frac{e^\varepsilon +1}{e^\varepsilon -1} 
         \leq \frac{4}{\varepsilon},
    \end{equation}
    due to $(e^\varepsilon -1)/(e^\varepsilon +1)= \tanh(x/2)$  and $\tanh(x/2) \geq x/4$ for $x\in[0,1]$.

With that for $\varepsilon \in (0,1)$, by choosing 
\begin{equation}
n \geq \frac{32 \ S^2}{\beta^2 \varepsilon^2} \ln\!\left(\frac{2}{\eta}\right),
\end{equation}
the described protocol achieves the demanded accuracy criterion by~\eqref{eq:Accuracy_gurantee_Hoeff}, so that providing an upper bound on optimal $n^*$, concluding the proof.

\subsection{Proof of Proposition~\ref{prop:upper_2}} \label{app:classical_shadow_SC_improved}

To obtain improved sample complexity guarantees compared to $  N \geq \frac{204  \Tr[O^2] }{\beta^2 }  \left( \frac{e^\varepsilon +d-1} {e^\varepsilon -1 +d \delta} \right)^2 \ln\!\left(\frac{2}{\eta}\right)$, we need to study the internal process of the mechanism before $\hat{\rho}$ is released. Towards this, we also have from~\cite[Claim~4.10]{Koh2022classicalshadows} that the composite channel $\cM_{\hat{p},d}$ in~\eqref{eq:composite_channel_clas_shad} is a depolarizing channel $\cA_q$ with $q=1- (1-\hat{p})/(d+1)$ (by choosing $f=1-\hat{p}$ therein and recalling that $ \cA_{q}(\rho) = (1-q) \rho + q \tr(\rho)\frac{1}{d} I.$).  We also utilize the fact that 
By~\cite[Corollary~4.11]{Koh2022classicalshadows}, we have that  $\Pr \left( \left| \Tr[O\rho]-\hat{o} \right| \leq \beta\right) \geq 1- \eta$
for $\eta \in (0,1)$ and $\beta>0$ if 
\begin{equation} \label{eq:SC_Noisy_Classi_Shado}
    N \geq \frac{204  \Tr[O^2] }{\beta^2 (1-\hat{p})^2} \ln\!\left(\frac{2}{\eta}\right).
\end{equation}

 By Proposition~\ref{prop:optimal_P_QLDP_Dep}, if $q \geq d(1-\delta)/(e^\varepsilon +d -1)$, $\cM_{\hat{p},d}$ satisfies $(\varepsilon, \delta)$-QLDP. Let us choose 
\begin{equation} \label{eq:hat_p_choice}
     \hat{p} = 1-\min \left\{1, \frac{(e^\varepsilon -1 +d \delta) (d+1)}{e^\varepsilon +d-1}\right\}.
\end{equation}
Then we have that $\hat{p} \in [0,1]$. We also see that $\hat{p}=0$ when $e^\varepsilon +\delta (d+1) \geq 2$. To this end, $q=1-1/(d+1)$ and we have that $q \geq d(1-\delta)/(e^\varepsilon +d -1)$, ensuring that $\cM_{\hat{p}=0,d}$ is $(\varepsilon,\delta)$ satisfies $(\varepsilon,\delta)$-QLDP when $e^\varepsilon +\delta (d+1) \geq 2$.

Similarly, when $e^\varepsilon +\delta (d+1) \leq 2$, we have $\hat{p}= p^* \coloneqq 1- \frac{(e^\varepsilon -1 +d \delta) (d+1)}{e^\varepsilon +d-1}$. With that choice we get $q=d(1-\delta)/(e^\varepsilon+d-1)$, showcasing that $\cM_{\hat{p}=p^*,d}$ satisfies $(\varepsilon,\delta)$-QLDP in the other regime as well. Thus, we conclude that $\cM_{\hat{p},d}$ with the choice $\hat{p}$ in~\eqref{eq:hat_p_choice}, satisfies $(\varepsilon,\delta)$-QLDP. 
With that, we arrive at
 \begin{equation}
        N \geq \frac{204  \Tr[O^2] }{\beta^2 } \max \left \{ 1, \left( \frac{e^\varepsilon +d-1} {(e^\varepsilon -1+d\delta)(d+1)} \right)^2  \right \} \ln\!\left(\frac{2}{\eta}\right)
    \end{equation}
by substituting $\hat{p}$ in~\eqref{eq:SC_Noisy_Classi_Shado}.

For the setting $\delta=0$, $\hat{p} \neq 0$ when $e^\varepsilon +0 \times (d+1) \leq 2$, which is equivalent to the regime $\varepsilon \leq \ln (2) <1$. Then, in this regime substituting $\hat{p}$ we get 
\begin{equation}
    N \geq \frac{204  \Tr[O^2] }{\beta^2 }  \left( \frac{e^\varepsilon +d-1} {(e^\varepsilon -1)(d+1)} \right)^2 \ln\!\left(\frac{2}{\eta}\right).
\end{equation}
Observe that this characterization does not scale with $d$ coming from privacy parameters since 
\begin{equation}
    \frac{e^\varepsilon +d-1} {(e^\varepsilon -1)(d+1)} \leq \frac{ (d-1)(e^\varepsilon +1)} {(e^\varepsilon -1)(d+1)} \leq \frac{e^\varepsilon +1}{e^\varepsilon -1}.
\end{equation}
And also since $\varepsilon \leq \ln(2) <1$, we have $  \frac{e^\varepsilon +d-1} {(e^\varepsilon -1)(d+1)} \leq \frac{4}{\varepsilon}$ from~\eqref{eq:epa_one_less_upper}.
So, for the setting $\varepsilon$-QLDP with $\varepsilon \leq \ln(2)$, we need 
\begin{equation} \label{eq:upper_2}
    N \geq \frac{3264  \Tr[O^2] }{\beta^2 \varepsilon^2} \ln\!\left(\frac{2}{\eta}\right), 
\end{equation}
providing an upper bound on the optimal sample complexity $n^*(O,\beta,\eta,\varepsilon)$.

\subsection{Pauli and Clifford Group}
For a qubit system ($d=2$), Pauli operators are the collection of $\{X, Z, Y, I\}$, where 
\begin{align}
    X \coloneqq \begin{bmatrix}
0 & 1 \\
1 & 0
\end{bmatrix}, \quad Z =
\begin{bmatrix}
1 & 0 \\
0 & -1
\end{bmatrix}, \quad Y =
\begin{bmatrix}
0 & -i \\
i & 0
\end{bmatrix}, \quad I =
\begin{bmatrix}
1 & 0 \\
0 & 1
\end{bmatrix}.
\end{align}
Also we write $\mathcal{P}_1 = \{ I, X, Y, Z \}.$

The Pauli group on $m$-qubits ($d=2^m$) is generated by tensor products of single-qubit Pauli matrices.
An $m$-qubit Pauli operator $P \in \cP_m$ has the form
\begin{equation}
    P = \sigma_1 \otimes \sigma_2 \otimes \cdots \otimes \sigma_m,
\end{equation}
where  each $\sigma_j \in \{ I, X, Y, Z \}$.  In fact, we have 
\begin{equation} \label{eq:pauli_operators}
    \cP_m \coloneqq \{ I, X, Y, Z \}^{\otimes m}. 
\end{equation}

The Clifford group $\cC_m$ is the set of unitary operations that map Pauli
operators to Pauli operators under conjugation:
\begin{equation} \label{eq:clifford_group}
    \mathcal{C}_m
= \{ U \in \mathbb{U}(2^m) :
\forall P \in \mathcal{P}'_m \ \exists  \widetilde{P}  \in \mathcal{P}'_m \  \textnormal{ s.t. } U P U^\dagger = \widetilde{P} \},
\end{equation}
where $\mathbb{U}(2^m)$ denotes the group of unitary matrices of size $2^m$, and \\ 
$\mathcal{P}'_m \coloneqq \left\{ i^k \ \sigma_1 \otimes \cdots \otimes \sigma_m
:
k \in \{0,1,2,3\},\ \sigma_j \in \{I,X,Y,Z\}
\right\}$.
\end{document}